\documentclass[reqno]{amsart}
\usepackage{amsmath}
\usepackage{amsfonts}
\usepackage{amssymb}
\usepackage{amssymb,graphics,graphicx,bbm,hyperref,color}

\usepackage{multicol}
\usepackage{enumitem} 
%\usepackage{showkeys}
%\usepackage{dsfont}

%%%%%%%%% Daniel's macros %%%%%%%%%%%%%%%%%%%%%%

\definecolor{gray}{rgb}{0.93,0.93,0.93}
\definecolor{light-gold}{rgb}{0.99,0.97,0.78}

\setlength{\oddsidemargin}{10mm}
\setlength{\evensidemargin}{15mm}
\setlength{\textwidth}{140mm}

\def\be{\begin{equation}}
\def\ee{\end{equation}}
\newcommand{\eq}[1]{(\ref{#1})}
\def\bm{\begin{multline}}
\def\bfig{\begin{figure}[htb]}
\def\efig{\end{figure}}

\newcommand{\e}[1]{\,{\rm e}^{#1}\,}
\newcommand{\ii}{{\rm i}}
\def\Tr{{\operatorname{Tr\,}}}

\newcommand{\sumtwo}[2]{\sum_{\substack{#1 \\ #2}}}

\DeclareMathOperator*{\Union}{\text{\Large$\cup$}}
 \def\idty{{\mathchoice {\mathrm{1\mskip-4mu l}} {\mathrm{1\mskip-4mu l}} %
{\mathrm{1\mskip-4.5mu l}} {\mathrm{1\mskip-5mu l}}}}

\numberwithin{equation}{section}
\newtheorem{theorem}{Theorem}[section]
\newtheorem{proposition}[theorem]{Proposition}
\newtheorem{lemma}[theorem]{Lemma}

\theoremstyle{remark}

\newcommand{\caH}{{\mathcal H}}
\newcommand{\caL}{{\mathcal L}}

\newcommand{\bbC}{{\mathbb C}}
\newcommand{\bbE}{{\mathbb E}}
\newcommand{\bbN}{{\mathbb N}}
\newcommand{\bbP}{{\mathbb P}}

\newcommand{\Rl}{{\mathbb R}}

\newcommand{\bbZ}{{\mathbb Z}}
\newcommand{\Ir}{{\mathbb Z}}

\makeatletter
  \def\tagform@#1{\maketag@@@{\footnotesize{(#1)}\@@italiccorr}}
\makeatother

% necessary because the original \eqref used \tagform!!
\renewcommand{\eqref}[1]{(\ref{#1})}
%%%%%% End of Daniel's macros %%%%%%%%%%%

\begin{document}

%{\hfill\small \version} \vspace{2mm}

\title[A proof of dimerization in $SU(n)$-invariant quantum spin chains]{A direct proof of dimerization in a family of $SU(n)$-invariant quantum spin chains}

\author{Bruno Nachtergaele}
\address{Department of Mathematics, University of California,
One Shields Ave, Davis, CA 95616, United States}
\email{bxn@math.ucdavis.edu}

\author{Daniel Ueltschi}
\address{Department of Mathematics, University of Warwick,
Coventry, CV4 7AL, United Kingdom}
\email{daniel@ueltschi.org}

\subjclass{82B10, 82B20, 82B26}

\keywords{quantum spin chain, dimerization, $SU(n)$-invariant chains}

\begin{abstract}
We study the family of spin-$S$ quantum spin chains with a nearest neighbor interaction given by the negative of the 
singlet projection operator. Using a random loop representation of the partition function in the limit of zero temperature 
and standard techniques of classical statistical mechanics, we prove dimerization for all sufficiently large values of  $S$.
\end{abstract}

\thanks{\copyright{} 2017 by the authors. This paper may be reproduced, in its entirety, for non-commercial purposes.}

\maketitle

\section{Introduction}
\label{sec intro}

Dimerization is the most common type and the most common mechanism of spontaneous lattice translation symmetry 
breaking in the ground state of quantum spin systems. It is ubiquitous in one dimension due to the so-called (spin-) 
Peierls instability \cite{cross:1979}. In two and more dimensions it gives rise to columnar phases and other 
patterns of lattice symmetry breaking \cite{sirker:2002,ralko:2008,frank:2011,giuliani:2015}. In the ground states of 
quantum spin chains with isotropic antiferromagnetic interactions, long-range anti-ferromagnetic order, and the 
accompanying spontaneous breaking of the continuous rotation symmetry, is prevented by quantum fluctuations. 
Short-range antiferromagnetic correlations, however, occur in several manifestations, some of which involve discrete 
symmetry breaking. Due to a result by Affleck and Lieb \cite{affleck:1986}, half-integer spin chains with a a rotation 
symmetry and a unique ground state, have a gapless excitation spectrum above the ground state. The opening up of a 
non-vanishing (in the thermodynamic limit) gap above the ground state requires that the ground state is degenerate 
and for a class of reflection positive antiferromagnetic chains it is known that the nature of this degeneracy is dimerization, 
i.e., breaking of the translation invariance from $\Ir$ to $2\Ir$ \cite{AN,nachtergaele:1994}.

In this paper we study a family of spin-$S$ chains with an isotropic nearest neighbor Hamiltonian for a chain of $L$ spins given by 
\be
H(S,L)= -\sum_{x=1}^{L-1} P_{x,x+1}^{(0)},
\label{HamSN}\ee
where $P_{x,x+1}^{(0)}$ denotes the orthogonal projection onto the singlet state of the two spins at sites $x$ and $x+1$.
Since $P_{x,x+1}^{(0)}$ commutes with the tensor product of the fundamental and the anti-fundamental representation 
of $SU(2S+1)$, this model has an $SU(2S+1)$ symmetry. For $S=1/2$ this Hamiltonian is, up to a constant, the standard 
spin-$1/2$  antiferromagnetic Heisenberg chain and its Bethe ansatz gives a unique, translation-invariant ground 
state with power law correlations and a gapless excitation spectrum \cite{korepin:1993}. 
For $S\geq 1$ the ground state is expected to be dimerized and two-fold degenerate, and with a positive spectral gap.
This expectation is supported by relationships between the spin chain Hamiltonian $H(S,L)$ and other spin Hamiltonians that 
are diagonalizable using the Bethe Ansatz or other exact solution methods \cite{barber:1989,klumper:1990,affleck:1990}. 
E.g., the $S=1$ model can be related to spin-$1/2$ $XXZ$ chain with anisotropy $\Delta = -3/2$, and to the transfer matrix 
of the standard $9$-state Potts model on a square lattice. These relationships stem from the observation that $H(S,L)$ and 
the related spin-chain Hamiltonian or transfer matrix are both representatives of an element $H_N$ in the abstract Temperley-Lieb 
algebra $TL_N(x)$, with $x=2S+1$. This implies that the spectrum of each of these operators is a subset of the algebraic 
spectrum of $H_N$, i.e., the complex numbers $\lambda$ for which $H_N -\lambda\idty$ fails to have an inverse 
in $TL_N(x)$. Many algebras have only one irreducible representation, in which case the spectrum is the same in any 
representation with only the multiplicity remaining to be determined. The number of irreducible representations of the 
Temperley-Lieb algebra $TL_N(x)$, however, grows as $\sim N^2$ and two representations could therefore, in principle, give 
entirely different spectra to the element $H_N$. Nevertheless, exact calculations on small chains and numerical calculations 
indicate that the spectra coincide up to multiplicity. More recent results on the algebraic Temperley-Lieb chains may point to 
an explanation of this `universality' of the spectrum. See \cite{nepomechie:2016} and the references therein for a discussion of
this phenomenon.

The relationship between the spin-$S$ chain with Hamiltonian \eq{HamSN} with $S\geq 1$
and the 2-dimensional $q$-state Potts model with $q=(2S+1)^2$ at the self-dual point extends in a non-trivial way 
to the states: The two dimerized ground states correspond to the coexisting ordered and disordered phases of
the $q$-state Potts-model on the square lattice at its critical point. This coexistence (a first order phase transition) has been 
found by Baxter \cite{baxter:1973} for $q\geq 5$ and has been rigorously established for all sufficiently large values of $q$ 
\cite{kotecky:1982}. Even if one accepts the identity of the spectra of the related Hamiltonians, relating order parameters or, 
in general, the states of these models is a subtle issue. In \cite{AN} it was shown that the dimerization order parameter for 
the spin chain and the order parameter of a suitable two-dimensional ferromagnetic Potts model are bounded by a multiple 
of each other. This implies that one of them vanishes if and only if the other does. The Potts model for which 
this relationship is proved can be regarded as a strongly anisotropic limit (with respect to the two lattice directions), in which 
one of the lattice directions tends to the continuum. A similar result likely also holds for the standard Potts model on the square 
lattice, but this has not been worked out in the literature.

Due to the subtleties of the relationships between the different models discussed above, a complete proof of dimerization has
been lacking. In view of the non-trivial nature of rigorous studies of the critical Potts model itself, it seemed worthwhile to look 
for a direct proof of the dimerization, bypassing establishing more details of the relationship between the various models.
In this article we provide the first complete proof of the existence of dimerized ground states for this class of models at sufficiently
large values of $S$. Moreover, by not relying on the Temperley-Lieb algebra and a relation to the Potts models, we open the 
possibility to study perturbations of the model away from the self-dual point in the phase diagram. This is important in the context 
of the current interest in stable gapped ground state phases of quantum lattice systems, which we briefly discuss in 
Section \ref{sec discussion}.

Our approach is based on a random loop representation for the partition function of the spin models \eq{HamSN} given in 
\cite{AN}, which has been applied in recent years for a number of other rigorous results for quantum spin models
\cite{BU,CNS,Lees}. For a review and extensions of the random loop representation for quantum spin models see \cite{Uel}. 
We give a precise statement of our results in the next section. A detailed description of the random loop representation and its 
properties in Section \ref{sec loop rep}. In Section \ref{sec contour rep} we introduce a suitable notion of contours and the Peierls
argument proof of our main result. We conclude with a short discussion.

\section{Setting \& results}
\label{sec setting}

For $\ell \in \bbN$, consider a chain of even length $2\ell$ consisting of the set of vertices $\{-\ell+1, -\ell+2, \dots, \ell\}$ with edges 
between nearest-neighbors. For $S\in  \frac12 \bbN$, the Hilbert space of a spin-$S$ chain is then
\be
\caH_\ell = \otimes_{x=-\ell+1}^\ell \bbC^{2S+1},
\ee
The interaction is nearest-neighbor and given by the orthogonal projection $P_{x,x+1}^{(0)}$ onto the spin singlet at the 
sites $x$ and $x+1$. In terms of the standard tensor product basis of  $\bbC^{2S+1} \otimes \bbC^{2S+1}$ constructed 
with the eigenvectors on the third component of the spin, this operator is given by
\be
P_{x,y}^{(0)} = \frac1{2S+1} \sum_{a,b=-S}^S (-1)^{a-b} |a,-a\rangle \langle b,-b|.
\ee
We also write $P_{x,x+1}^{(0)}$ for the operator on $\caH_\ell$, obtained by tensoring it with the identity on the Hilbert spaces 
corresponding to sites other than $x$ and $x+1$. The Hamiltonian of the models studied in this paper is
\be
H_\ell = -\sum_{x=-\ell+1}^{\ell-1} P_{x,x+1}^{(0)}.
\ee

The interaction can be written in terms of the usual spin operators $S_x^i$, $i=1,2,3$, which satisfy
\be
[S_x^1, S_y^2] = \delta_{x,y} \ii S_x^3,
\ee
and further relations obtained by cyclic permutations of the indices $1,2,3$, and also satisfy
\be
(S_x^1)^2 + (S_x^2)^2 + (S_x^3)^2 = S(S+1).
\ee
For instance, for $S=1/2, 1$, and $3/2$, one has
\be
\begin{aligned}
&S=\tfrac12: & P_{x,y}^{(0)} &= - \vec S_x \cdot \vec S_y + \tfrac14, \\
&S=1: &  P_{x,y}^{(0)} &= \tfrac13 (\vec S_x \cdot \vec S_y)^2 - \tfrac13\\
&S=\tfrac32: &  P_{x,y}^{(0)} &= -\tfrac{1}{18}  (\vec S_x \cdot \vec S_y)^3  - \tfrac{5}{72}   (\vec S_x \cdot \vec S_y)^2
+ \tfrac{31}{96}\vec S_x \cdot \vec S_y + \tfrac{33}{128}.
\end{aligned}
\ee

The $S=\frac12$ model is the usual Heisenberg antiferromagnet and it is not expected to display long-range order in its one-dimensional ground 
state. Dimerization is expected for $S\geq1$ and in this paper we prove it for $S\geq 40$. More specifically, we derive a lower 
bound for a dimerization parameter order in the zero-temperature limit of Gibbs states on even chains, uniformly in the length of the chain.
To state our result, let $\langle\cdot\rangle_{\beta,\ell}$ denote the Gibbs state,
\be
\langle a \rangle_{\beta,\ell} = \frac1{\Tr \e{-\beta H_\ell}} \Tr a \e{-\beta H_\ell},
\label{gibbs}\ee
and let $\langle\cdot\rangle_{\infty,\ell} = \lim_{\beta\to\infty} \langle\cdot\rangle_{\beta,\ell}$ denote the expectation in the ground state. Our main result is the following theorem.

\begin{theorem}
\label{thm dimers}
For $S\geq40$, there exists $c(S)>0$\footnote{An explicit expression is derived in the proofs of this paper, in particular, the proof of
Proposition \ref{prop Peierls}. This yields the bound with 
$$
c(S) =\left(1-\frac{1}{(2S+1)^2}\right)\left(1-2\left[ \frac{64}{(2S+1)^{3/2}} + \frac{128}{(2S+1)^2} + \frac1{12} \sum_{k\geq7} \frac{(k+1) 4^k}{(2S+1)^{\frac12 k - 1}}
\right]\right).
$$ 
The geometric series is summable for $S>15/2$, which is the explanation why exponential decay of correlations (and hence dimerization) can be shown for 
$S\geq 8$ (Theorem \ref{thm exp decay}). Note that the expression for $c(S)$ is positive for $S\geq 40$.
} such that
\[
\langle P_{x,x+1}^{(0)} \rangle_{\infty,\ell} - \langle P_{x-1,x}^{(0)} \rangle_{\infty,\ell} > c(S),
\]
for all $x \in \{-\ell+3, -\ell+5, \dots,\ell-1\}$, uniformly in $\ell \in \bbN$.
\end{theorem}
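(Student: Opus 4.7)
The plan is to translate the dimerization question into a classical statistical-mechanics problem via the random loop representation (Section \ref{sec loop rep}) and then run a Peierls-type contour argument, using $n := 2S+1 \gg 1$ as the large parameter that suppresses contours. Concretely, one rewrites $\Tr \e{-\beta H_\ell}$ as an integral over configurations $\omega$ of Poisson-distributed ``bridges'' on the edges of $\{-\ell+1,\dots,\ell\}\times[0,\beta)$, each weighted by $n^{L(\omega)}$, where $L(\omega)$ is the number of loops traced out in the resulting space-time cylinder. Under this measure, $\langle P^{(0)}_{x,x+1}\rangle_{\beta,\ell}$ is (proportional to) the probability of having a bridge at edge $(x,x+1)$ at time $0$, and the $\beta\to\infty$ limit admits an analogous representation.

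Next, I would identify two candidate pure phases. On an even chain with open boundaries, $L(\omega)$ is maximized by dense stacks of bridges on the ``A-cover'' edges $(-\ell+1,-\ell+2),(-\ell+3,-\ell+4),\dots,(\ell-1,\ell)$; the competing ``B-cover'' on the intermediate edges uses one fewer edge and is correspondingly disfavored whenever the bridge density is large. For $x$ as in the statement, $(x,x+1)$ lies on the A-cover while $(x-1,x)$ lies on the B-cover, so Theorem \ref{thm dimers} asserts that the system is predominantly in the A-phase.

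The core of the proof would then be to introduce contours $\gamma$ as connected regions of the space-time cylinder in which $\omega$ locally disagrees with the A-cover, and to establish a Peierls bound of the form
\[
\bbP(\omega \supset \gamma) \le \Bigl( \frac{C}{n^{\alpha}} \Bigr)^{|\gamma|}
\]
for suitable constants $C,\alpha>0$, by showing that the surgery replacing $\gamma$ with a piece of the A-cover adds at least $\alpha|\gamma|$ loops to $L(\omega)$. Combined with the standard entropy estimate $K^{|\gamma|}$ on the number of contours of length $|\gamma|$ surrounding $(x,x+1)\times\{0\}$, this would force the probability that $(x,x+1)\times\{0\}$ is enclosed by any contour to be small for $n$ large, which in turn yields the dimerization bound.

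The principal obstacle will be defining contours in a way adapted to a continuous-time loop model and verifying the Peierls estimate with sharp enough constants. In particular, the local loop count changes in subtle ways when several bridges accumulate on a single edge within a short time interval, and absorbing both the entropy factor $K^{|\gamma|}$ and the combinatorial contributions from such clusters is what drives the threshold $S\ge 40$ and the specific form of $c(S)$ announced in the footnote.
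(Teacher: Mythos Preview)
Your overall strategy matches the paper's: random loop representation, then a Peierls argument with $2S+1$ as the large parameter suppressing contours. Two points need correction, however.

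First, under the loop measure $\langle P^{(0)}_{x,x+1}\rangle$ is \emph{not} the probability of a bridge at $(x,x+1)\times\{0\}$; by Proposition~\ref{prop AN}(b) it is an affine function of the probability that $(x,0)$ and $(x+1,0)$ belong to the \emph{same loop}. The distinction matters: even in the pure A-phase a given A-edge at time $0$ need not carry a bridge, yet $x\leftrightarrow x+1$ holds. The relevant order parameter is therefore $\bbP(x\leftrightarrow x+1)-\bbP(x-1\leftrightarrow x)$, and what the Peierls argument must bound is the probability that $(x,0)$ lies on or is enclosed by a \emph{long loop}, not the probability of a ``wrong'' bridge.

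Second, defining contours as ``connected regions where $\omega$ disagrees with the A-cover'' is too vague to execute, because the A-phase is not a single configuration but the whole family of configurations supported on A-edges; there is no local notion of disagreement. The paper instead takes the contours to be the long loops themselves (loops visiting more than two sites). The surgery is not ``replace $\gamma$ by a piece of the A-cover'' but rather: erase the double bars of $\gamma$ and \emph{shift its $2$-interior by one lattice unit} so that the B-edge portion of the interior lands on A-edges. The loop-count bookkeeping for this operation (Lemma~\ref{lem last difficulty}) yields a cost $(2S+1)^{-|\gamma|/2+1}$, together with an entropic gain $(1+\tfrac1n)^{-nL(\gamma)/2}$ from the freed vertical length. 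A further conditioning on all external contours \emph{not} surrounding $(x,0)$ (Lemma~\ref{lem special Peierls}) reduces the problem to a single-contour estimate in an arbitrary domain $X$, which is what makes the entropy sum in Proposition~\ref{prop Peierls} close. These two devices --- the shift of ${\rm int}_2\gamma$ and the conditioning on the external configuration --- are precisely the ``sharp enough constants'' you anticipate needing, and without them the Peierls step does not go through as you have sketched it.
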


The theorem states that, for any $x \in \{-\ell+3, -\ell+5, \dots,\ell-1\}$, the probability that the bond $\{x,x+1\}$ is in the singlet state (dimerized) 
exceeds the probability that $\{x-1,x\}$ is in the singlet state by a positive amount, uniformly in $\ell$. 
See Figure\ \ref{fig dimers} for an illustration. Thus, in the limit $\ell\to\infty$ along even values, one gets a ground state where $\{-1,0\}$ is more likely to be dimerized than $\{0,1\}$. In the limit $\ell\to\infty$ along odd values, the converse is true. This establishes the existence of two distinct, non-translation
invariant ground states.  By an averaging procedure, one sees that there are two periodic ground states of period 2.  We conjecture that these are the only ground states of the infinite chain. The proof of Theorem \ref{thm dimers} can be found at the end of Section \ref{sec contour rep}.

An open question is whether dimerization occurs in higher dimensions. The presence of N\'eel order has been proved for $S=\frac12$ and $d\geq3$ \cite{DLS}, and for any $S \geq 1$ provided the dimension is large enough (depending on $S$) \cite{Uel}. This leaves open the possibility that for fixed $d$, such as $d=2$, dimerization occurs when $S$ is large enough.

\begin{centering}
\bfig
\begin{picture}(0,0)%
\includegraphics{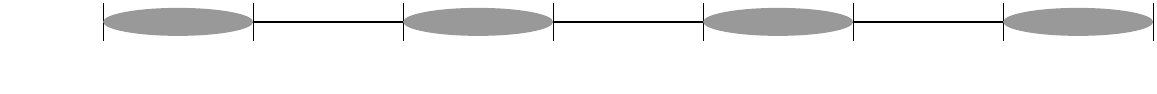}%
\end{picture}%
\setlength{\unitlength}{2368sp}%
\begingroup\makeatletter\ifx\SetFigFont\undefined%
\gdef\SetFigFont#1#2#3#4#5{%
  \reset@font\fontsize{#1}{#2pt}%
  \fontfamily{#3}\fontseries{#4}\fontshape{#5}%
  \selectfont}%
\fi\endgroup%
\begin{picture}(9247,736)(376,-2135)
\put(376,-2086){\makebox(0,0)[lb]{\smash{{\SetFigFont{7}{8.4}{\rmdefault}{\mddefault}{\updefault}{\color[rgb]{0,0,0}$-\ell+1=-3$}%
}}}}
\put(2251,-2086){\makebox(0,0)[lb]{\smash{{\SetFigFont{7}{8.4}{\rmdefault}{\mddefault}{\updefault}{\color[rgb]{0,0,0}$-2$}%
}}}}
\put(4726,-2086){\makebox(0,0)[lb]{\smash{{\SetFigFont{7}{8.4}{\rmdefault}{\mddefault}{\updefault}{\color[rgb]{0,0,0}$0$}%
}}}}
\put(7126,-2086){\makebox(0,0)[lb]{\smash{{\SetFigFont{7}{8.4}{\rmdefault}{\mddefault}{\updefault}{\color[rgb]{0,0,0}$2$}%
}}}}
\put(8326,-2086){\makebox(0,0)[lb]{\smash{{\SetFigFont{7}{8.4}{\rmdefault}{\mddefault}{\updefault}{\color[rgb]{0,0,0}$3$}%
}}}}
\put(3451,-2086){\makebox(0,0)[lb]{\smash{{\SetFigFont{7}{8.4}{\rmdefault}{\mddefault}{\updefault}{\color[rgb]{0,0,0}$-1$}%
}}}}
\put(5926,-2086){\makebox(0,0)[lb]{\smash{{\SetFigFont{7}{8.4}{\rmdefault}{\mddefault}{\updefault}{\color[rgb]{0,0,0}$1$}%
}}}}
\put(9526,-2086){\makebox(0,0)[lb]{\smash{{\SetFigFont{7}{8.4}{\rmdefault}{\mddefault}{\updefault}{\color[rgb]{0,0,0}$4=\ell$}%
}}}}
\end{picture}%
\caption{Dimerization in a chain of length $\ell=4$.}
\label{fig dimers}
\efig
\end{centering}

Our proof of dimerization is based on the random loop representation of \cite{AN}. We introduce excitation contours in a background of 
dimerized short loops, a setting that allows to use a Peierls argument. It is presented in Section \ref{sec contour rep}. More precisely, 
Theorem \ref{thm dimers} is an immediate consequence of Proposition \ref{prop Peierls} combined with Proposition \ref{prop AN}, 
Lemma \ref{lem simple configs}, and Proposition \ref{prop surround}.

The ground state is no longer translation invariant, but it still has spin rotation invariance, as shown in the following theorem. In particular, there is 
no N\'eel order. In fact, this theorem implies that in the two periodic ground states all correlations (not just spin-spin correlations) decay 
exponentially, and therefore are extremal periodic ground states. This supports our conjecture that they are they only ground states of the 
infinite chain.

\begin{theorem}\label{thm exp decay}
For $S\geq8$, there exist $C,\eta>0$ such that
\[
\bigl| \langle S_x^i S_y^j \rangle_{\infty,\ell} \bigr| \leq C \e{-|x-y|/\eta},
\]
for all $i,j = 1,2,3$, all $x,y \in \{-\ell+1,\dots,\ell\}$, and all $\ell\in\bbN$.
\end{theorem}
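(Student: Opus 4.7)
The plan is to reduce Theorem \ref{thm exp decay} to an exponential decay estimate for loop connectivity in the random loop representation, and then exploit the very contour machinery developed for Theorem \ref{thm dimers}. In the first step, I would use Proposition \ref{prop AN} to establish a bound of the form
\[
|\langle S_x^i S_y^j \rangle_{\infty,\ell}| \leq C_S \, \bbP_{\infty,\ell}(x \sim y),
\]
where $\bbP_{\infty,\ell}$ is the ground-state loop measure, $x \sim y$ is the event that sites $x$ and $y$ lie on the same loop at a fixed slice of the vertical time circle, and $C_S$ is a constant of order $S^2$. This inequality follows from the standard identity expressing $\langle S_x^i S_y^j \rangle$ in terms of a signed loop functional, together with $SU(2S+1)$-invariance, which forces the correlation to be diagonal in $i,j$ and insensitive to the internal coloring of loops. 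Thus the theorem reduces to exponential decay of $\bbP_{\infty,\ell}(x \sim y)$.

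In the second step, I would argue geometrically, using the contour representation of Section \ref{sec contour rep}, that the event $\{x \sim y\}$ forces a connected chain of contours whose combined spatial support covers every bond between $x$ and $y$. The point is that on any bond $\{z,z+1\}$ not contained in some contour, the configuration sits in the dimerized background of short length-$2$ loops, and no loop can traverse such a bond. Hence a loop joining $x$ and $y$ must be threaded through a chain of contours of total spatial extent at least $|x-y|$.

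In the third step, I would apply the Peierls-type contour weight bound obtained en route to Proposition \ref{prop Peierls}. The activity of an individual contour of size $k$ is controlled by the terms of the series
\[
\sum_{k \geq 2} \frac{(k+1) 4^k}{(2S+1)^{k/2-1}},
\]
which converges precisely when $2S+1 > 16$, that is, for $S > 15/2$. For $S \geq 8$ one can moreover make a tail of this series small enough to produce a per-length contour activity $\rho(S) < 1$. A standard Peierls sum over contour chains spanning the interval $[x,y]$ then yields
\[
\bbP_{\infty,\ell}(x \sim y) \leq C \, \rho(S)^{|x-y|},
\]
and the theorem follows with $\eta = 1/|\log \rho(S)|$, uniformly in $\ell$.

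The step I expect to be the main obstacle is the geometric reduction in the second paragraph: showing that loop connectivity between distant sites genuinely \emph{forces} a spanning contour chain, compatibly with the combinatorial encoding of contours used in Section \ref{sec contour rep}. The loops in this representation live on a two-dimensional space-time and can be intricate (containing many bridges and crosses), so the correspondence between "long connecting loops" and "long contour chains" must be handled carefully. Once this combinatorial input is in place, the required analytic estimate is essentially the same contour-series summability that already appears in the proof of Theorem \ref{thm dimers}; the weaker threshold $S \geq 8$ (compared to $S \geq 40$ for dimerization) arises because here one only needs the Peierls series to converge, rather than to be dominated by the positive background term $1 - (2S+1)^{-2}$.
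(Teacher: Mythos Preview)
Your overall strategy is correct and matches the paper's: reduce to exponential decay of loop connectivity via the identity $\langle S_x^i S_y^j \rangle = \tfrac13 S(S+1)\,\delta_{ij}\,(-1)^{x-y}\,\bbP(x \leftrightarrow y)$, then feed this into the contour/Peierls machinery of Section~\ref{sec contour rep}. Your explanation of the threshold $S\geq 8$ (mere convergence of the Peierls series, as opposed to the stronger positivity needed for Theorem~\ref{thm dimers}) is also exactly the paper's.

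Where you diverge from the paper, and where you anticipate the main obstacle, is the geometric reduction in your second step. You propose that $\{x \leftrightarrow y\}$ forces a \emph{chain} of contours whose combined spatial support spans $[x,y]$, and plan a polymer-type sum over such chains. The paper's observation is simpler: if $|x-y|\geq 2$ and $x \leftrightarrow y$, then the loop through $(x,0)$ and $(y,0)$ is itself a long loop, i.e.\ a contour; hence there exists a \emph{single} contour $\gamma$ surrounding $(x,0)$ with $|\gamma|\geq |x-y|$. After the conditioning of Lemma~\ref{lem special Peierls} this is the unique external contour in $\widetilde\Omega_X$, and Lemma~\ref{lem pre-Peierls} applies verbatim with the sum restricted to $|\gamma|\geq |x-y|$. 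Inserting the harmless factor $\e{|x-y|/\eta}\leq \e{|\gamma|/\eta}$ and rerunning the estimate of Proposition~\ref{prop Peierls} yields
\[
\lim_{n\to\infty} \e{|x-y|/\eta}\,\bbP_{\beta,\ell,n}(x\leftrightarrow y\mid \Omega_{\ell,n}^\alpha)
\;\leq\; \tfrac1{12}\sum_{k\geq |x-y|} \frac{(k+1)\,(4\e{1/\eta})^k}{(2S+1)^{k/2-1}},
\]
finite for $S>15/2$ and $\eta$ large enough. So no chaining is needed: the obstacle you flagged dissolves once you notice that the connecting loop is already one large contour. Your chain-of-contours route could in principle be pushed through, but it would require an honest polymer expansion on top of the single-contour estimate, which is more work than the situation demands.
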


The proof of this theorem can be found at the end of Section \ref{sec contour rep}; it is based on the same properties of the random loop representation that we prove for Theorem \ref{thm dimers}.
Correlations between $x$ and $y$ can be expressed in terms of events in which loops (contours) connect $x$ to $y$, and which require
that these points are surrounded by contours of size larger than $|x-y|$. The probability of these contours decays exponentially fast with 
respect to their size. Because {\em all} correlation functions can be expressed in terms of loop connectivity (see \cite{AN}), it follows
that all correlations decay exponentially.
Notice that Theorem \ref{thm exp decay} holds for smaller $S$ than Theorem \ref{thm dimers}; the reason is that it only involves large loops.

Theorem \ref{thm exp decay} also implies that the translation symmetry is spontaneously broken in the ground state for all $S\geq 8$. This follows from
\cite[Theorem 6.1]{AN}. For $S\geq 40$, Theorem \ref{thm dimers} gives a quantitative estimate of this non-translation invariance in terms of the
probability that two nearest neighbor spins form a dimer. 

\section{Loop representation and contours}
\label{sec loop rep}

We now give a description of the random loop representation of the Gibbs states defined in \eq{gibbs}.
It is convenient to restrict to $\beta\in\bbN$.

\subsection{Loops}

Let $E_\ell = \bigl\{ \{x,x+1\} : -\ell+1 \leq x \leq \ell-1 \bigr\}$ denote the set of edges. We also introduce the set $E_\ell^1$ of edges that are expected 
to preferentially host dimers (those that are shaded in Fig.\ \ref{fig dimers})
\be
E_\ell^1 = \bigl\{ \{x,x+1\} : x \in \{-\ell+1,-\ell+3,\dots,\ell-1\} \bigr\},
\ee
and $E_\ell^2 = E_\ell \setminus E_\ell^1$.
The length of the chain is chosen to be even and this, it turns out, makes it more likely that we have dimers at both edges of the 
chain. This is the mechanism that will select one of the two ground states.

In contrast to previous applications of the random loop representation \cite{AN,Uel}, it will be convenient for us here to use a discrete version of the 
loop representation, defined as follows. Let $n \in \bbN$, and consider the set $T_{\beta,n}$ of discrete times, 
$T_{\beta,n} = \frac1n \bbZ \cap [-\beta, \beta]$. A configuration $\omega$ is a subset of $E_\ell \times T_{\beta,n}$; we say that a ``double 
bar" is present at $\{x,x+1\} \times t \in E_\ell \times T_{\beta,n}$ whenever $\{x,x+1\} \times t \in \omega$; there are no double bars at $\{x,x+1\} \times t$ 
otherwise. We let $\Omega_{\ell,n}$ denote the set of configurations where no more than one double bar occurs at any given time; it is also useful to exclude double bars at time 0.

To a configuration $\omega \in \Omega_{\ell,n}$ corresponds a set of {\bf loops}, as illustrated in Fig.\ \ref{fig loops}. A loop $\gamma$ is a closed trajectory
\be
\begin{split}
[0,L]_{\rm per} &\to \{-\ell+1,\dots,\ell\} \times [-\beta,\beta]_{\rm per} \\
t &\mapsto \gamma(t) = \bigl( x(t), T(t) \bigr)
\end{split}
\ee
such that $x(t)$ is piecewise constant and $T'(t) = \pm1$. We let $L(\gamma)$ denote the length of $\gamma$, that is, the smallest $L>0$ in the above equation. A jump occurs at $t \in [0,L(\gamma)]$ provided $\{ x(t-), x(t+) \} \times T(t)$ contains a double bar, and that $T'(t-) = - T'(t+)$. We identify loops with identical support. Let $\caL(\omega)$ denote the number of loops of the configuration $\omega$. We always have $1 \leq \caL(\omega) \leq 2\ell + |\omega|$ and
\be
\sum_{j=1}^{\caL(\omega)} L(\gamma_j) = 2\beta \ell.
\ee

\begin{centering}
\bfig
\begin{picture}(0,0)%
\includegraphics{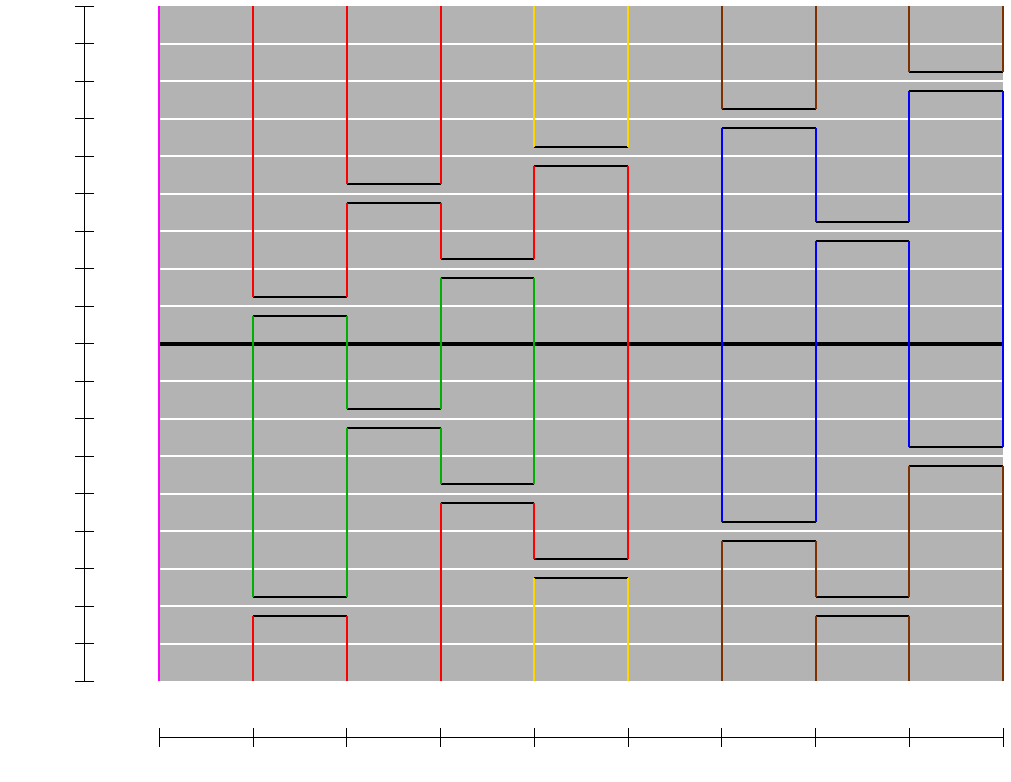}%
\end{picture}%
\setlength{\unitlength}{2368sp}%
\begingroup\makeatletter\ifx\SetFigFont\undefined%
\gdef\SetFigFont#1#2#3#4#5{%
  \reset@font\fontsize{#1}{#2pt}%
  \fontfamily{#3}\fontseries{#4}\fontshape{#5}%
  \selectfont}%
\fi\endgroup%
\begin{picture}(8018,6272)(1166,-6560)
\put(2101,-6511){\makebox(0,0)[lb]{\smash{{\SetFigFont{7}{8.4}{\rmdefault}{\mddefault}{\updefault}{\color[rgb]{0,0,0}$-\ell+1$}%
}}}}
\put(9076,-6511){\makebox(0,0)[lb]{\smash{{\SetFigFont{7}{8.4}{\rmdefault}{\mddefault}{\updefault}{\color[rgb]{0,0,0}$\ell$}%
}}}}
\put(5326,-6511){\makebox(0,0)[lb]{\smash{{\SetFigFont{7}{8.4}{\rmdefault}{\mddefault}{\updefault}{\color[rgb]{0,0,0}$0$}%
}}}}
\put(1336,-2801){\makebox(0,0)[lb]{\smash{{\SetFigFont{6}{7.2}{\rmdefault}{\mddefault}{\updefault}{\color[rgb]{0,0,0}$1/n$}%
}}}}
\put(1491,-3096){\makebox(0,0)[lb]{\smash{{\SetFigFont{6}{7.2}{\rmdefault}{\mddefault}{\updefault}{\color[rgb]{0,0,0}$0$}%
}}}}
\put(1416,-5806){\makebox(0,0)[lb]{\smash{{\SetFigFont{6}{7.2}{\rmdefault}{\mddefault}{\updefault}{\color[rgb]{0,0,0}$-\beta$}%
}}}}
\put(1546,-396){\makebox(0,0)[lb]{\smash{{\SetFigFont{6}{7.2}{\rmdefault}{\mddefault}{\updefault}{\color[rgb]{0,0,0}$\beta$}%
}}}}
\put(1331,-2491){\makebox(0,0)[lb]{\smash{{\SetFigFont{6}{7.2}{\rmdefault}{\mddefault}{\updefault}{\color[rgb]{0,0,0}$2/n$}%
}}}}
\put(1186,-3396){\makebox(0,0)[lb]{\smash{{\SetFigFont{6}{7.2}{\rmdefault}{\mddefault}{\updefault}{\color[rgb]{0,0,0}$-1/n$}%
}}}}
\put(1166,-3691){\makebox(0,0)[lb]{\smash{{\SetFigFont{6}{7.2}{\rmdefault}{\mddefault}{\updefault}{\color[rgb]{0,0,0}$-2/n$}%
}}}}
\end{picture}%
\caption{Loop representation of the $SU(2S+1)$-invariant quantum spin chains.}
\label{fig loops}
\efig
\end{centering}

The relevant probability measure on $\Omega_{\ell,n}$ involves the number of loops and is given by
\be
\mu_{\beta,\ell,n}(\omega) = \frac1{Z_n(\beta,\ell)} \bigl( \tfrac1n \bigr)^{|\omega|} (2S+1)^{\caL(\omega)-|\omega|},
\ee
with
\be
Z_n(\beta,\ell) = \sum_{\omega \in \Omega_{\ell,n}} \bigl( \tfrac1n \bigr)^{|\omega|} (2S+1)^{\caL(\omega)-|\omega|}.
\ee
In the following, we use the notation $\bbP_{\beta,\ell,n}$ and $ \bbE_{\beta,\ell,n}$ for the probability and expectation with respect to the measure $\mu_{\beta,\ell,n}$. The event $x \leftrightarrow y$ (resp.\ $x \not\leftrightarrow y$) represents the set of all $\omega \in \Omega_{\ell,n}$ where $x \times 0$ and $y \times 0$ belong to the same loop (resp.\ belong to distinct loops).

\begin{proposition}
\label{prop AN}
\hfill

(a) $\displaystyle \Tr \e{-2\beta H_\ell} = \lim_{n\to\infty} Z_n(\beta,\ell)$.

(b) $\displaystyle \langle P_{x,x+1}^{(0)} \rangle_{2\beta,\ell} = \tfrac1{(2S+1)^2} + \bigl( 1 - \tfrac1{(2S+1)^2} \bigr) \lim_{n\to\infty} \bbP_{\beta, \ell,n}(x \leftrightarrow x+1)$.
\end{proposition}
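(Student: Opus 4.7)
The plan is to derive both identities from a Lie--Trotter expansion of $e^{-2\beta H_\ell}$ at time step $1/n$, followed by an explicit evaluation of the trace using the singlet form of $P^{(0)}$. Specifically, I would start from
$$
e^{-2\beta H_\ell} = \lim_{n\to\infty} \Bigl( \prod_{\{x,x+1\}\in E_\ell} \bigl( I + \tfrac{1}{n} P^{(0)}_{x,x+1} \bigr) \Bigr)^{2\beta n},
$$
with some fixed product ordering inside the brackets. Expanding each factor, the resulting terms are labelled by finite subsets $\omega \subset E_\ell \times T_{\beta,n}$ and carry weight $(1/n)^{|\omega|}$ times $\prod_{(e,t)\in\omega} P^{(0)}_e(t)$. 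Contributions with two or more double bars at the same time slot are $O(1/n)$ and are discarded, which leaves a sum over $\omega \in \Omega_{\ell,n}$.

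To compute $\Tr \prod_{(e,t)\in\omega} P^{(0)}_e(t)$ I would insert resolutions of the identity in the $S^z$ basis between consecutive time slots and use the explicit form $P^{(0)}_{x,y} = \frac{1}{2S+1} \sum_{a,b}(-1)^{a-b} |a,-a\rangle\langle b,-b|$. The matrix elements are nonzero precisely when the labels are preserved along each vertical strand and satisfy the singlet constraint at every bar, which is exactly the combinatorial content of the loop picture in Fig.~\ref{fig loops}. A free sum over the reference label of each loop contributes a factor of $2S+1$, the $\frac{1}{2S+1}$ prefactors from the bars give $(2S+1)^{-|\omega|}$, and a parity check shows that the signs $(-1)^{a-b}$ collected at successive bars collapse to $+1$ around every closed loop. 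The trace therefore equals $(2S+1)^{\caL(\omega) - |\omega|}$, which yields~(a).

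For part~(b) I would run the same expansion on $\Tr P^{(0)}_{x,x+1} e^{-2\beta H_\ell}$, treating the outer $P^{(0)}_{x,x+1}$ as a mandatory bar at $(\{x,x+1\},0)$ with prefactor $\frac{1}{2S+1}$ and no $1/n$. For each $\omega \in \Omega_{\ell,n}$, I would compare $\caL(\omega \cup \{(\{x,x+1\},0)\})$ to $\caL(\omega)$: on the event $\{x \leftrightarrow x+1\}$ the extra bar splits one loop into two, while on $\{x \not\leftrightarrow x+1\}$ it merges two loops into one. Combining the resulting $(2S+1)^{\pm 1}$ factor with the extra $\frac{1}{2S+1}$, the weight of $\omega$ in $\Tr P^{(0)}_{x,x+1} e^{-2\beta H_\ell}$ is its original weight multiplied by $1$ in the connected case and by $(2S+1)^{-2}$ in the disconnected case. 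Dividing by $Z_n(\beta,\ell)$ and letting $n \to \infty$ gives
$$
\langle P^{(0)}_{x,x+1}\rangle_{2\beta,\ell} = \lim_{n\to\infty}\Bigl[\bbP_{\beta,\ell,n}(x \leftrightarrow x+1) + (2S+1)^{-2}\bbP_{\beta,\ell,n}(x \not\leftrightarrow x+1)\Bigr],
$$
which rearranges to the claimed identity using $\bbP_{\beta,\ell,n}(x \not\leftrightarrow x+1) = 1 - \bbP_{\beta,\ell,n}(x \leftrightarrow x+1)$.

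The main obstacle is the sign bookkeeping along loops: the factors $(-1)^{a-b}$ at successive bars must combine to $+1$ regardless of whether $S$ is integer or half-integer (note $a-b \in \bbZ$ in either case), and this must be verified independently of the loop's geometry. I would handle this by conjugating the Hamiltonian by an on-site unitary on, say, the even sublattice that absorbs the staggered signs into the basis, so that every matrix element of $P^{(0)}$ becomes manifestly nonnegative and the trace reduces to a direct enumeration of loops. The remaining technicalities --- controlling multi-bar contributions in the Trotter limit and justifying the $n \to \infty$ passage in numerator and denominator --- are standard; alternatively, one can appeal to the continuous-time loop representation already derived in \cite{AN} and note that our discrete measure $\mu_{\beta,\ell,n}$ converges to it as a Bernoulli-to-Poisson limit.
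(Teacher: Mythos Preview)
Your proposal is correct and follows essentially the same route as the paper: Trotter expansion, identification of the trace of a product of singlet projections as $(2S+1)^{\caL(\bar\omega)-|\bar\omega|}$, and the $\pm 1$ change in loop number upon inserting the extra bar at $(\{x,x+1\},0)$. The paper simply asserts the trace identity (deferring to \cite{AN}) whereas you spell out the loop-labeling and sign cancellation, including the sublattice unitary trick; this is the standard argument and your outline of it is accurate.
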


\begin{proof}
We only prove (b), since the proof of (a) is similar and it can be found in \cite{AN}. Using  the Trotter formula, we have
\be
\begin{split}
\Tr \e{-\beta H_\ell} P_{x,x+1}^{(0)} \e{-\beta H_\ell} &= \lim_{n\to\infty} \Tr \prod_{\{y,y+1\} \times t \in E_\ell \times T_{\beta,n}} \bigl[ 1 + \tfrac1n P_{y,y+1}^{(0)} \bigr] \\
&\hspace{35mm} P_{x,x+1}^{(0)} \prod_{\{y,y+1\} \times t \in E_\ell \times T_{\beta,n}} \bigl[ 1 + \tfrac1n P_{y,y+1}^{(0)} \bigr] \\
&= \lim_{n \to \infty} \sum_{\omega \in \Omega_{\ell,n}} \bigl( \tfrac1n \bigr)^{|\omega|} \; \Tr \prod_{\{y,y+1\} \times t \,\in\, \bar\omega} P_{y,y+1}^{(0)}.
\end{split}
\ee
Here, the configuration $\bar\omega$ is equal to the configuration $\omega$ with an extra double bar at $\{x,x+1\} \times 0$, and the last product is ordered by increasing times $t \in T_{\beta,n}$. It is not hard to verify that
\be
\Tr \prod_{\{y,y+1\} \times t \,\in\, \bar\omega} P_{y,y+1}^{(0)} = (2S+1)^{\caL(\bar\omega)-|\bar\omega|}.
\ee
Next, observe that
\be
\caL(\bar\omega) = \begin{cases} \caL(\omega)+1 & \text{if } x \leftrightarrow x+1; \\ \caL(\omega)-1 & \text{if } x \not\leftrightarrow x+1. \end{cases}
\ee
As a consequence, we have
\be
\begin{split}
\Tr \e{-\beta H_\ell} P_{x,x+1}^{(0)} &\e{-\beta H_\ell} = \lim_{n\to\infty} \sum_{\omega \in \Omega_{\ell,n}}\bigl( \tfrac1n \bigr)^{|\omega|} \; (2S+1)^{\caL(\omega)-|\omega|-1} \\
&\hspace{28mm} \Bigl[ (2S+1) 1_{x \leftrightarrow x+1}(\omega) + \tfrac1{2S+1} 1_{x \not\leftrightarrow x+1}(\omega) \Bigr] \\
&= \lim_{n\to\infty} Z_n(\beta,\ell) \Bigl[ \bbP_{\beta,\ell,n}(x \leftrightarrow x+1) + \tfrac1{(2S+1)^2} \bbP_{\beta,\ell,n}(x \not\leftrightarrow x+1) \Bigr],
\end{split}
\ee
and the claim (b) of the proposition follows.
\end{proof}

\begin{centering}
\bfig
\begin{picture}(0,0)%
\includegraphics{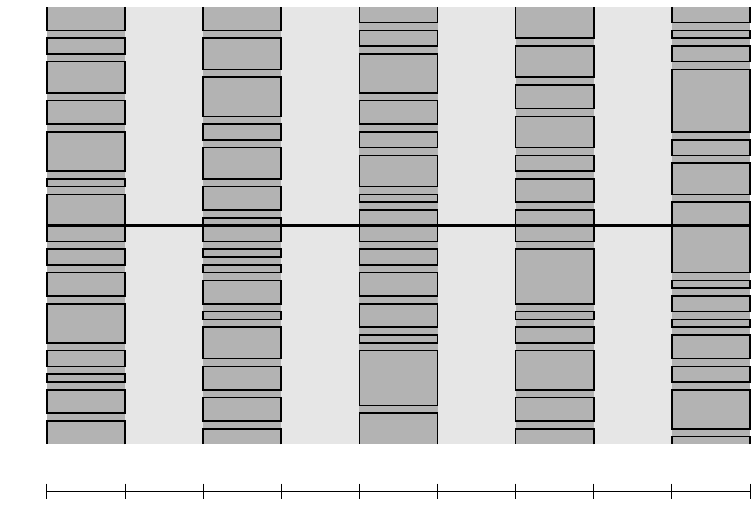}%
\end{picture}%
\setlength{\unitlength}{1973sp}%
\begingroup\makeatletter\ifx\SetFigFont\undefined%
\gdef\SetFigFont#1#2#3#4#5{%
  \reset@font\fontsize{#1}{#2pt}%
  \fontfamily{#3}\fontseries{#4}\fontshape{#5}%
  \selectfont}%
\fi\endgroup%
\begin{picture}(7233,5056)(1951,-6560)
\put(9076,-6511){\makebox(0,0)[lb]{\smash{{\SetFigFont{6}{7.2}{\rmdefault}{\mddefault}{\updefault}{\color[rgb]{0,0,0}$\ell$}%
}}}}
\put(1951,-5761){\makebox(0,0)[lb]{\smash{{\SetFigFont{6}{7.2}{\rmdefault}{\mddefault}{\updefault}{\color[rgb]{0,0,0}$-\beta$}%
}}}}
\put(5326,-6511){\makebox(0,0)[lb]{\smash{{\SetFigFont{6}{7.2}{\rmdefault}{\mddefault}{\updefault}{\color[rgb]{0,0,0}$0$}%
}}}}
\put(2026,-6511){\makebox(0,0)[lb]{\smash{{\SetFigFont{6}{7.2}{\rmdefault}{\mddefault}{\updefault}{\color[rgb]{0,0,0}$-\ell+1$}%
}}}}
\put(2101,-3736){\makebox(0,0)[lb]{\smash{{\SetFigFont{6}{7.2}{\rmdefault}{\mddefault}{\updefault}{\color[rgb]{0,0,0}$0$}%
}}}}
\put(2101,-1636){\makebox(0,0)[lb]{\smash{{\SetFigFont{6}{7.2}{\rmdefault}{\mddefault}{\updefault}{\color[rgb]{0,0,0}$\beta$}%
}}}}
\end{picture}%
\caption{Illustration for `optimal' configurations that allow for large numbers of loops.}
\label{fig background}
\efig
\end{centering}

The measure $\mu_{\beta,\ell,n}$ is biased towards configurations with many loops. The optimal way to stack many loops is to have only double bars on edges $\{x,x+1\} \in E_\ell^1$, see Fig.\ \ref{fig background}. If $\omega$ has this form, we have
\be
1_{x \leftrightarrow x+1}(\omega) = \begin{cases} 1 & \text{if } \{x,x+1\} \in E_\ell^1, \\ 0 & \text{if } \{x,x+1\} \in E_\ell^2. \end{cases}
\ee
If these were typical configurations, we would get $\bbP_{\beta, \ell,n}(x \leftrightarrow x+1) = 1 > 0 = \bbP_{\beta,\ell,n}(x \not\leftrightarrow x+1)$ for all 
$x \in \{-\ell+1, -\ell+3, \dots, \ell-1\}$, and Theorem \ref{thm dimers} follows from Proposition \ref{prop AN}. Actual typical configurations are naturally 
more complicated but, for large $S$, not very much, as they possess this structure up to sparse `excitations'. This is shown below. We put excitations
in quotes here because they are present with a non-zero density in the ground states, and can be regarded as representing the quantum fluctuations 
at zero temperature.

\subsection{Simplified set of configurations without winding loops}

As is usual, we will derive our estimates for finite systems, which in our case means finite chains at finite inverse temperature $\beta$. The estimates 
will then carry over to the limit of infinite $\beta$ and the infinite chain. In the limit $\beta\to\infty$, the so-called {\em winding loops} will have vanishing
probability. These winding loops are a complication for the Peierls-type argument we want to develop. Therefore, it will be helpful to work with a 
restricted set of configurations in which almost all winding loops have been eliminated. To do this we need to show that the probability of the 
configurations we ignore indeed vanishes in the limit $\beta\to\infty$. This is the purpose of the next lemma.

Given an integer $\alpha<\beta$, let $\Omega_{\ell,n}^\alpha$ denote the set of configurations where intervals $\{x,x+1\} \times [\alpha,\alpha+1]$ contain at least one double bar if $\{x,x+1\} \in E_\ell^1$, and none if $\{x,x+1\} \in E_\ell^2$. These configurations possess a convenient, spontaneous boundary condition in the time direction (this is depicted in Fig.\ \ref{fig contours}).
Almost all configurations have this property for some $\alpha \in \bbN$:

\begin{lemma}
\label{lem simple configs}
We have
\[
\lim_{\beta\to\infty} \bbP_{\beta,\ell,n} \bigl( \cup_{\alpha=1}^\beta \Omega_{\ell,n}^\alpha \bigr) = 1.
\]
The limit $\beta\to\infty$ converges uniformly in $n$.
\end{lemma}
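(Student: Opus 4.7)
The plan is to show that the complementary probability $\bbP_{\beta,\ell,n}\bigl(\bigcap_{\alpha=1}^\beta(\Omega_{\ell,n}^\alpha)^c\bigr)$ decays geometrically in $\beta$, uniformly in $n$, via a conditional-probability lower bound followed by an iteration over disjoint strips. Fix $\alpha$ and write $\omega=\omega_{\rm in}\cup\omega_{\rm out}$, where $\omega_{\rm in}$ is the restriction of $\omega$ to $E_\ell\times\bigl([\alpha,\alpha+1]\cap T_{\beta,n}\bigr)$, and let $\Omega_{\rm in}^\alpha$ denote the set of ``perfect'' strip configurations (at least one double bar on each $E_\ell^1$ edge and none on any $E_\ell^2$ edge). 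Since $\Omega_{\ell,n}^\alpha$ is a strip event,
\[
\bbP_{\beta,\ell,n}(\Omega_{\ell,n}^\alpha\mid\omega_{\rm out}) = \frac{\sum_{\omega_{\rm in}\in\Omega_{\rm in}^\alpha}(1/n)^{|\omega_{\rm in}|}(2S+1)^{\caL(\omega_{\rm in}\cup\omega_{\rm out})-|\omega_{\rm in}|}}{\sum_{\omega_{\rm in}}(1/n)^{|\omega_{\rm in}|}(2S+1)^{\caL(\omega_{\rm in}\cup\omega_{\rm out})-|\omega_{\rm in}|}}.
\]

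The key elementary fact I will use is that inserting or removing a single double bar changes $\caL$ by exactly $\pm 1$ (it either merges two loops or splits one), so $|\caL(\omega_{\rm in}\cup\omega_{\rm out})-\caL(\omega_{\rm out})|\leq|\omega_{\rm in}|$. This immediately gives the denominator upper bound
\[
\sum_{\omega_{\rm in}}(1/n)^{|\omega_{\rm in}|}(2S+1)^{\caL-|\omega_{\rm in}|} \leq (2S+1)^{\caL(\omega_{\rm out})}\bigl(1+(2\ell-1)/n\bigr)^n \leq (2S+1)^{\caL(\omega_{\rm out})}\,\e{2\ell-1},
\]
uniformly in $n$. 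For the numerator I restrict to those $\omega_{\rm in}\in\Omega_{\rm in}^\alpha$ having exactly one double bar on each $E_\ell^1$ edge at distinct times and none on $E_\ell^2$ edges; with $\ell_1:=|E_\ell^1|=\ell$ there are $n(n-1)\cdots(n-\ell_1+1)\geq(n/2)^{\ell_1}$ such configurations (for $n\geq 2\ell_1$), each of weight at least $(1/n)^{\ell_1}(2S+1)^{\caL(\omega_{\rm out})-2\ell_1}$. Taking the ratio gives
\[
\bbP_{\beta,\ell,n}(\Omega_{\ell,n}^\alpha\mid\omega_{\rm out}) \geq p_\ell := 2^{-\ell}(2S+1)^{-2\ell}\,\e{-(2\ell-1)} > 0,
\]
uniformly in $\omega_{\rm out}$, $\alpha\in\{1,\ldots,\beta\}$, and $n\geq 2\ell$.

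For the iteration, I pick disjoint strips $\alpha_1<\cdots<\alpha_K$ in $\{1,\ldots,\beta\}$ with $\alpha_{i+1}\geq\alpha_i+2$, so $K\geq\lfloor\beta/2\rfloor$. Since the event $(\Omega_{\ell,n}^{\alpha_i})^c$ for $i<K$ depends only on a strip disjoint from $[\alpha_K,\alpha_K+1]$, it is measurable with respect to the $\sigma$-algebra of the complement of strip $\alpha_K$. The tower property and the conditional bound yield
\[
\bbP_{\beta,\ell,n}\Bigl(\bigcap_{i=1}^K(\Omega_{\ell,n}^{\alpha_i})^c\Bigr) \leq (1-p_\ell)\,\bbP_{\beta,\ell,n}\Bigl(\bigcap_{i=1}^{K-1}(\Omega_{\ell,n}^{\alpha_i})^c\Bigr),
\]
so by induction the probability that all strips fail is at most $(1-p_\ell)^K\to 0$ as $\beta\to\infty$; the finitely many small values of $n$ below $2\ell$ (for which the bound may be trivially zero or need a separate estimate) do not affect the uniform limit in the relevant regime.

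The main technical point is the $\pm 1$ change in $\caL$ per inserted or removed double bar, which follows from the topological observation that adding a crossing between two strands either joins two cycles into one or splits one cycle into two. Everything else is a routine manipulation of conditional expectations.
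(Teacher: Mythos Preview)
Your proof is correct and follows essentially the same strategy as the paper's: use the elementary bound $|\caL(\omega_{\rm in}\cup\omega_{\rm out})-\caL(\omega_{\rm out})|\leq|\omega_{\rm in}|$ to show that the conditional probability of a ``perfect'' strip is bounded below by a positive constant independent of $\alpha,\beta,n$, and then iterate to get geometric decay of the complementary probability. The paper conditions on the event $A_\alpha^{\rm c}=\bigcap_{\alpha'<\alpha}(\Omega_{\ell,n}^{\alpha'})^{\rm c}$ rather than on the full outside configuration $\omega_{\rm out}$, and iterates over consecutive rather than well-separated strips, but these are cosmetic differences (and your caveat about small $n$ is harmless, since only the regime $n\to\infty$ is used downstream).
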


\begin{proof}
Let $A_\alpha = \cup_{\alpha'<\alpha} \Omega_{\ell,n}^{\alpha'}$. We have $A_{\alpha+1} \supset A_\alpha$ for all $\alpha \in \bbN$, and one checks below that
\be
\label{il faut verifier}
\bbP_{\beta,\ell,n}(A_{\alpha+1} | A_\alpha^{\rm c}) > c,
\ee
with $c>0$ independent of $\alpha, \beta, n$ (it depends on $S,\ell$, though). Then
\be
\bbP_{\beta,\ell,n}(A_\beta^{\rm c}) = \bbP_{\beta,\ell,n}(A_\beta^{\rm c} | A_{\beta-1}^{\rm c}) \, \bbP_{\beta,\ell,n}(A_{\beta-1}^{\rm c}) \leq (1-c) \, \bbP_{\beta,\ell,n}(A_{\beta-1}^{\rm c}).
\ee
Iterating, we find that $\bbP_{\beta,\ell,n}(A_\beta^{\rm c})$ is less than $(1-c)^\beta$, which implies the lemma.

In order to check \eqref{il faut verifier}, let $\hat A_\alpha \subset A_\alpha^{\rm c}$ be the set of configurations such that no double bars occur at times in $[\alpha,\alpha+1] \cap T_{\beta,n}$. A configuration $\omega \in A_\alpha^{\rm c}$ can be decomposed as $\omega = \hat\omega \cup \omega'$, where $\hat\omega \in \hat A_\alpha$ and $\omega'$ contains only double bars at times in $[\alpha,\alpha+1] \cap T_{\beta,n}$. We have
\be
\caL(\hat\omega) - |\omega'| \leq \caL(\omega) \leq \caL(\hat\omega) + |\omega'|,
\ee
so that
\be
\mu_{\beta,\ell,n}(\hat\omega) \bigl(\tfrac1n\bigr)^{|\omega'|} (2S+1)^{-2|\omega'|} \leq \mu_{\beta,\ell,n}(\omega) \leq \mu_{\beta,\ell,n}(\hat\omega) \bigl(\tfrac1n\bigr)^{|\omega'|}.
\ee
Then
\be
\begin{split}
\bbP_{\beta,\ell,n}(A_{\alpha+1} | A_\alpha) &= \frac{\bbP_{\beta,\ell,n}( \Omega_{\ell,n}^\alpha \cap A_\alpha^{\rm c})}{\bbP_{\beta,\ell,n}(A_\alpha^{\rm c})} = \frac{\sum_{\hat\omega \in \hat A_\alpha} \sum_{\omega' : \Omega_{\ell,n}^\alpha} \mu_{\beta,\ell,n}(\hat\omega \cup \omega')}{\sum_{\hat\omega \in \hat A_\alpha} \sum_{\omega'} \mu_{\beta,\ell,n}(\hat\omega \cup \omega')} \\
&\geq \frac{\bigl( 1 + \frac1n (2S+1)^{-2} \bigr)^{\ell n}}{\bigl( 1 + \frac1n \bigr)^{(2\ell+1) n}}.
\end{split}
\ee
The last term is indeed bounded away from 0, uniformly in $\alpha,\beta,n$.
\end{proof}

Loops visiting only one or two sites are called {\em short loops}. The loops shown in Fig.\ \ref{fig background} are all short loops. 
We say that a loop is {\em long} if it is not short.
Since the loops of configurations of $\Omega_{\ell,n}^\alpha$ do not wind, they have an interior in the sense of Jordan curves; we actually call ``interior" its intersection with $\{-\ell+1,\dots,\ell\} \times T_{\beta,n}$.
For $\omega \in \Omega_{\ell,n}^\alpha$, we introduce the event $E_x^\circlearrowleft$ 
where $(x,0)$ belongs to a long loop, or to the interior of a long loop. See Eq.\ \eqref{def surround} for an equivalent definition that involves contours, to be defined below.

\begin{proposition}
\label{prop surround}
If $x \in \{ -\ell+3, -\ell+5, \dots, \ell-1 \}$, we have
\[
\bbP_{\beta,\ell,n}(x \leftrightarrow x+1 | \Omega_{\ell,n}^\alpha) - \bbP_{\beta,\ell,n}(x-1 \leftrightarrow x | \Omega_{\ell,n}^\alpha) \geq 1 - 2 \bbP_{\beta,\ell,n}(E_x^\circlearrowleft | \Omega_{\ell,n}^\alpha).
\]
\end{proposition}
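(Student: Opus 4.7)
The plan is to reduce the inequality to a deterministic statement about configurations in $\Omega_{\ell,n}^\alpha$. Let $A = \{x \leftrightarrow x+1\}$ and $B = \{x-1 \leftrightarrow x\}$, and abbreviate $\bbP(\,\cdot\,) = \bbP_{\beta,\ell,n}(\,\cdot \mid \Omega_{\ell,n}^\alpha)$ and $q = \bbP(E_x^\circlearrowleft)$. If the pointwise inclusion $(E_x^\circlearrowleft)^{\rm c} \subseteq A \cap B^{\rm c}$ holds on $\Omega_{\ell,n}^\alpha$, then $\bbP(A) \geq 1-q$ and $\bbP(B) \leq q$, and subtracting gives exactly $\bbP(A) - \bbP(B) \geq 1 - 2q$. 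So all the probabilistic content of the proposition is already absorbed in the conditioning, and what remains is the deterministic inclusion.

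The core task is therefore the pointwise claim: for every $\omega \in \Omega_{\ell,n}^\alpha$ such that the loop through $(x,0)$ is short and is not enclosed by any long loop, this loop is a rectangular short loop supported on the preferred edge $\{x,x+1\} \in E_\ell^1$. Granting this, the loop passes through both $(x,0)$ and $(x+1,0)$ (so $A$ holds) while not visiting $x-1$ (so $B$ fails), which is exactly what is required for the inclusion.

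To verify the claim I would exploit the definition of $\Omega_{\ell,n}^\alpha$: in the time slab $[\alpha,\alpha+1]$ the configuration matches the background of Fig.\ \ref{fig background}, so $(x,\alpha)$ sits on a short loop supported on $\{x,x+1\}$. I would then argue that the maximal spacetime region around $(x,0)$ in which the configuration coincides with the background pattern of short loops on $E_\ell^1$-edges has boundary made of long loops — the Peierls-type contours to be introduced in Section \ref{sec contour rep}, giving the equivalent description of $E_x^\circlearrowleft$ referred to in \eqref{def surround}. Since by hypothesis no such long loop surrounds $(x,0)$, the background pattern extends unbroken up to $(x,0)$, and the loop through $(x,0)$ is indeed a short loop on $\{x,x+1\}$.

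The main obstacle is precisely this last step: rigorously equating ``no long loop surrounds $(x,0)$'' with ``the configuration agrees with the background in a neighborhood of $(x,0)$''. This identification rests on the combinatorial fact, to be established once contours are defined in Section \ref{sec contour rep}, that every connected cluster of deviations from the background — double bars on $E_\ell^2$-edges, or missing double bars in the stacks on $E_\ell^1$-edges — is encircled by a long loop that plays the role of its Peierls contour. Once this combinatorial picture is in place, the proposition reduces entirely to the probabilistic bookkeeping in the first paragraph.
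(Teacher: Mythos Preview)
Your proposal is correct and follows essentially the same approach as the paper: reduce to the pointwise inclusion $(E_x^\circlearrowleft)^{\rm c} \subseteq \{x \leftrightarrow x+1\} \cap \{x-1 \not\leftrightarrow x\}$ on $\Omega_{\ell,n}^\alpha$, then read off the two probability bounds and subtract. The paper's proof asserts the deterministic inclusion in a single sentence without further argument, while you spell out that the short loop through $(x,0)$ must sit on the $E_\ell^1$-edge $\{x,x+1\}$ and link this to the contour picture of Section~\ref{sec contour rep}; that is added care rather than a different route.
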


\begin{proof}
Let $\omega \in \Omega_{\ell,n}^\alpha$. If $\omega \notin E_x^\circlearrowleft$, then $(x,0)$ belongs to the exterior of all long loops and we have $x \leftrightarrow x+1$ and $x-1 \not\leftrightarrow x$. Then
\be
\begin{split}
&\bbP_{\beta,\ell,n}(x \leftrightarrow x+1 | \Omega_{\ell,n}^\alpha) \geq 1 - \bbP_{\beta,\ell,n}(E_x^\circlearrowleft | \Omega_{\ell,n}^\alpha), \\
&\bbP_{\beta,\ell,n}(x-1 \not\leftrightarrow x | \Omega_{\ell,n}^\alpha) \geq 1 - \bbP_{\beta,\ell,n}(E_x^\circlearrowleft | \Omega_{\ell,n}^\alpha).
\end{split}
\ee
The claim of the proposition follows immediately.
\end{proof}

There remains to show that $\bbP_{\beta,\ell,n}(E_x^\circlearrowleft | \Omega_{\ell,n}^\alpha)$ is less than $\frac12$, uniformly in $\beta,\ell,n$.

\subsection{Blocks and domains}

We define a ``block" to be a set $\{x,x+1\} \times I$, where $\{x,x+1\} \in E_\ell^1$ and $I$ is a proper interval in $T_{\beta,n}$. A ``domain" $D$ is a finite collection of disjoint blocks.
%Let $b(D)$ denote the number of blocks of $D$.

Let $\Omega_D$ denote the set of configurations of double bars on $D$; that is, double bars within blocks of $D$, or between blocks involving nearest neighbors at equal times. We let $Z(D)$ denote the partition function on $D$, namely,
\be
Z(D) = \sum_{\omega \in \Omega_D} \bigl( \tfrac1n \bigr)^{|\omega|} (2S+1)^{\caL_D(\omega)- |\omega|}.
\ee
Loops are defined as before, with the understanding that bars are present at the ends of each block of $D$ and $\caL_D(\omega)$ is the number of loops. Let $\bbP_D$ denote the probability measure on configurations on $D$.
%Notice that $Z(D)$ is an increasing function of the domain $D$, $Z(D) \leq Z(D')$ whenever $D \subset D'$.
Let
\be
D_\alpha = \Union_{\{x,x+1\} \in E_\ell^1} \{x,x+1\} \times [\alpha+1,\alpha],
\ee
where $[\alpha+1,\alpha]$ is the interval in $T_{\beta,n}$ that contains $0$.

\begin{lemma}
\label{lem contour expression}
We have
\[
\bbP_{\beta,\ell,n}(E_x^\circlearrowleft | \Omega_{\ell,n}^\alpha) = \bbP_{D_\alpha}(E_x^\circlearrowleft).
\]
\end{lemma}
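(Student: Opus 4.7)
The plan is to decompose each $\omega \in \Omega_{\ell,n}^\alpha$ as $\omega = \omega_{\rm in} \cup \omega_{\rm slab}$, where $\omega_{\rm in}$ collects the bars at times in $[\alpha+1,\alpha]$ (the cyclic interval containing $0$) and $\omega_{\rm slab}$ collects those at times in $[\alpha,\alpha+1]$. By the definition of $\Omega_{\ell,n}^\alpha$, $\omega_{\rm in}$ ranges freely over $\Omega_{D_\alpha}$, and $\omega_{\rm slab}$ ranges over configurations with bars supported only on $E_\ell^1$-edges with at least one bar per such edge.

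The key structural observation is that under this conditioning the slab acts as a barrier. Because no $E_\ell^2$-bars lie in the slab and each site has exactly one incident $E_\ell^1$-edge, the slab decouples into independent two-site pieces, one per $E_\ell^1$-edge. On the piece $\{x,x+1\}\times[\alpha,\alpha+1]$ with $k_{x,x+1}\geq 1$ bars, direct inspection shows that the lowest and highest bars pinch the four boundary strands into two U-shapes---one at $\alpha$ joining $(x,\alpha)$ to $(x+1,\alpha)$, and one at $\alpha+1$ joining $(x,\alpha+1)$ to $(x+1,\alpha+1)$---while the $k_{x,x+1}-1$ intermediate consecutive pairs of bars create $k_{x,x+1}-1$ closed loops trapped in the slab. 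These boundary U-shapes coincide with the reflections prescribed by the bars-at-block-ends convention used to define $\caL_{D_\alpha}$, so each loop of $\omega$ is either a loop of $\omega_{\rm in}$ with matching support in $D_\alpha$, or a short closed loop confined to the slab. Summing over the $E_\ell^1$-edges yields the additive loop count
\[
\caL(\omega) = \caL_{D_\alpha}(\omega_{\rm in}) + \bigl(|\omega_{\rm slab}| - \ell\bigr).
\]

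Since $0 \in [\alpha+1,\alpha]$, the site $(x,0)$ lies in $D_\alpha$, and every slab loop is short; therefore the event $E_x^\circlearrowleft$ depends only on $\omega_{\rm in}$ and agrees with the analogous event on $\Omega_{D_\alpha}$. Combined with $|\omega|=|\omega_{\rm in}|+|\omega_{\rm slab}|$, the Boltzmann weight factors as
\[
\bigl(\tfrac1n\bigr)^{|\omega|}(2S+1)^{\caL(\omega)-|\omega|} = \bigl(\tfrac1n\bigr)^{|\omega_{\rm in}|}(2S+1)^{\caL_{D_\alpha}(\omega_{\rm in})-|\omega_{\rm in}|} \cdot \bigl(\tfrac1n\bigr)^{|\omega_{\rm slab}|}(2S+1)^{-\ell}.
\]
Substituting into
\[
\bbP_{\beta,\ell,n}(E_x^\circlearrowleft \mid \Omega_{\ell,n}^\alpha) = \frac{\sum_{\omega \in \Omega_{\ell,n}^\alpha \cap E_x^\circlearrowleft} \bigl(\tfrac1n\bigr)^{|\omega|}(2S+1)^{\caL(\omega)-|\omega|}}{\sum_{\omega \in \Omega_{\ell,n}^\alpha} \bigl(\tfrac1n\bigr)^{|\omega|}(2S+1)^{\caL(\omega)-|\omega|}},
\]
the slab sum and the constant $(2S+1)^{-\ell}$ cancel between numerator and denominator, leaving exactly $\bbP_{D_\alpha}(E_x^\circlearrowleft)$.

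The main obstacle is the structural claim that the slab acts as a barrier and produces the boundary U-shapes and the correct count of trapped loops. This requires tracing how each double bar redirects a trajectory, and verifying that the pairing at the slab boundary is insensitive to the intermediate bars on the same $E_\ell^1$-edge. The cleanest route is to exploit the absence of $E_\ell^2$-bars in the slab to reduce to an independent combinatorial check on a single two-site slab piece.
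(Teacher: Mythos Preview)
Your proof is correct and follows essentially the same approach as the paper's: decompose $\omega\in\Omega_{\ell,n}^\alpha$ into its $D_\alpha$-part and its slab part, establish the loop-count identity $\caL(\omega)=\caL_{D_\alpha}(\omega_{\rm in})+|\omega_{\rm slab}|-\ell$, note that $E_x^\circlearrowleft$ depends only on $\omega_{\rm in}$, and factor out the slab contribution. The paper states these facts more tersely (without your explanation of the U-shapes and the $k-1$ trapped loops per two-site slab piece), but the argument is identical.
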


\begin{proof}
In the left side, the sum over $\omega \in \Omega_{\ell,n}^\alpha$ can be done by summing over $\omega \in \Omega_{D_\alpha}$, and by summing over configurations $\omega'$ of double bars on $E_\ell^1 \times [\alpha,\alpha+1]$. We have
\be
\sum_{\omega \in E_x^\circlearrowleft \cap \Omega_{\ell,n}^\alpha} \bigl( \tfrac1n \bigr)^{|\omega|} (2S+1)^{\caL(\omega) - |\omega|} = \sum_{\omega \in \Omega_{D_\alpha}} \sum_{\omega'} 1_{E_x^\circlearrowleft}(\omega \cup \omega') \bigl( \tfrac1n \bigr)^{|\omega \cup \omega'|} (2S+1)^{\caL(\omega \cup \omega') - |\omega \cup \omega'|}.
\ee
We have $|\omega \cup \omega'| = |\omega| + |\omega'|$ and
\be
\caL(\omega \cup \omega') = \caL_{D_\alpha}(\omega) + |\omega'| - \ell.
\ee
Finally, observe that the event $E_x^\circlearrowleft$ depends on $\omega$ only; the contribution of $\omega'$ can be factored out and the lemma follows.
\end{proof}

\section{Contour representation}\label{sec contour rep}

We now introduce a contour representation suitable for executing a Peierls argument. Developing a Peierls argument for our model is not entirely straightforward. In particular,  since the cost of large contours is entropic and we need to estimate the sparsity of loops instead of relying on an energy estimate which is usually quite immediate. We found it necessary to condition on arbitrary configurations of external contours not involving a given point $(x,0)$. This allows to apply a simpler Peierls argument with a single contour, but in domains of arbitrary shape. We begin with a number of definitions.

\begin{centering}
\bfig
\includegraphics[width=130mm]{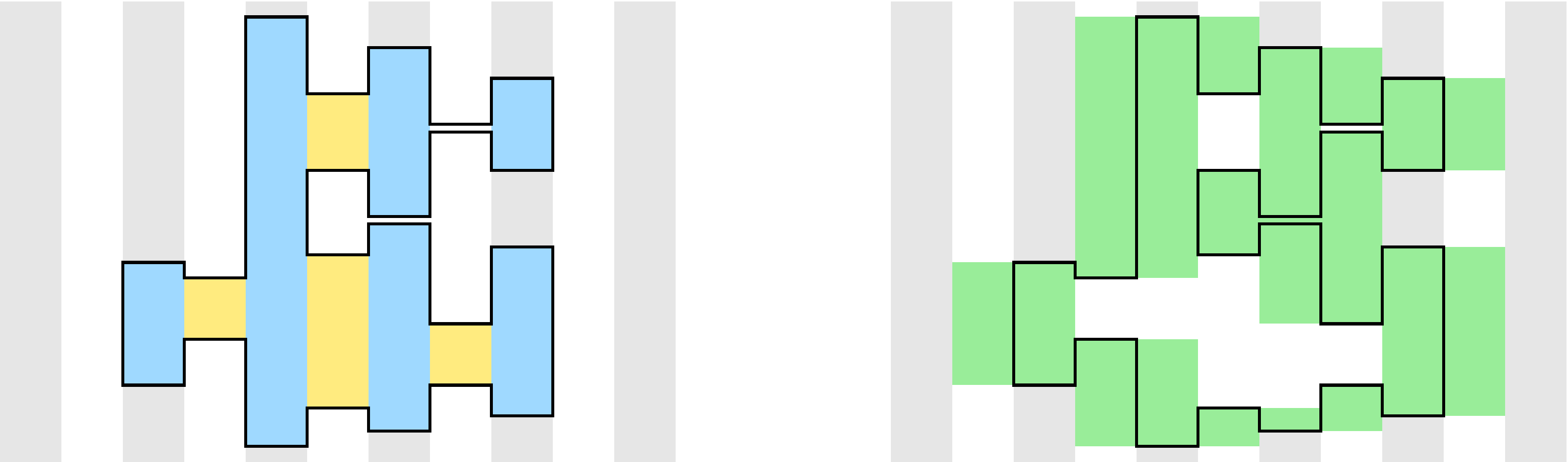}
\caption{(Color online) Left: a contour $\gamma$ with its 1-interior ${\rm int}_1\gamma$ (blue) and its 2-interior ${\rm int}_2\gamma$ (yellow). Right: the same contour with its support ${\rm supp}\,\gamma$ (green).}
\label{fig one contour}
\efig
\end{centering}

An edge in space-time is of the form $(\{x,x+1\}, t)$, $x\in [-\ell, \ell-1]\subset \mathbb{Z}, t\in [-\beta,\beta]\subset \mathbb{R}$. In figures it is more
convenient to replace $\{x,x+1\}$ by $[x,x+1]\subset\Rl$. In the same way, in figures we will depict a block as a rectangular region in 
$\mathbb{R}^2$ of the form $[x,x+1]\times [a,b]$.
 
We now call  {\em contour} a long loop; a contour $\gamma$ is characterized by
\begin{itemize}
\item its number of double bars, $|\gamma|$;
\item its 1-interior ${\rm int}_1\gamma$, that is, the set of edges $\{x,x+1\} \times t$ inside the enclosed area, with $\{x,x+1\} \in E_\ell^1$;
\item its 2-interior ${\rm int}_2\gamma$, that is, the set of edges $\{x,x+1\} \times t$ inside the enclosed area, with $\{x,x+1\} \in E_\ell^2$;
\item the length $L(\gamma)$ such that $|{\rm int}_1\gamma| - |{\rm int}_2\gamma| = \tfrac12 n L(\gamma)$, where $|\cdot|$ means cardinality. $L(\gamma)$ is almost equal to the lengths of vertical legs of the loop (it is equal up to $\frac{|\gamma|}n$), because each stretch of the boundary of ${\rm int}_1\gamma$ belongs either to the perimeter, or to the vertical boundary of ${\rm int}_2\gamma$;
\item its support ${\rm supp}\,\gamma$, the set of edges of $E_\ell$ with at least one endpoint on the loop;
\item its exterior ${\rm ext}\,\gamma$, which is equal to the union of blocks outside the enclosed area; 
\end{itemize}

A contour with its interiors and its support is displayed on Fig.\ \ref{fig one contour}.
Notice that the event $E_x^\circlearrowleft$ is equivalent to
\be
\label{def surround}
E_x^\circlearrowleft = \bigl\{ \omega \in \Omega_{\ell,n}^\alpha : \text{ $\exists$ contour $\gamma$ such that } (x,0) \in {\rm int}_1\gamma \bigr\}.
\ee
Given $\omega \in \Omega_{\ell,n}^\alpha$, we call ``external contour" a loop such that
\begin{itemize}
\item it involves at least a jump $\{x,x+1\} \in E_\ell^2$;
\item it is not surrounded by another loop.
\end{itemize}
This definition is illustrated in Fig.\ \ref{fig contours}. We identify an external contour $\gamma$ with the set of double bars that it involves; that is, $\gamma$ denotes both a closed trajectory, and an element of $\Omega_{D_\alpha}$.

\begin{centering}
\bfig
\begin{picture}(0,0)%
\includegraphics{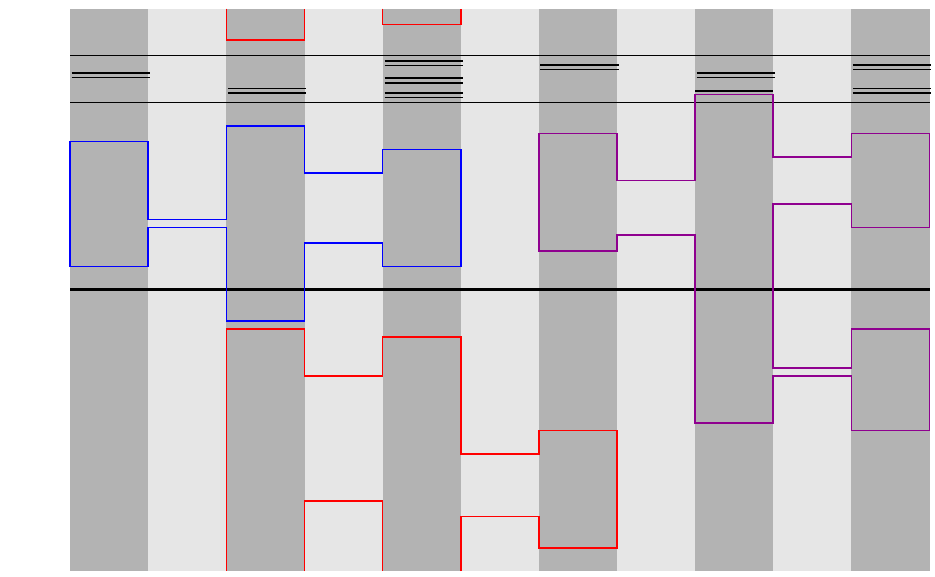}%
\end{picture}%
\setlength{\unitlength}{1973sp}%
\begingroup\makeatletter\ifx\SetFigFont\undefined%
\gdef\SetFigFont#1#2#3#4#5{%
  \reset@font\fontsize{#1}{#2pt}%
  \fontfamily{#3}\fontseries{#4}\fontshape{#5}%
  \selectfont}%
\fi\endgroup%
\begin{picture}(8962,5506)(226,-5810)
\put(451,-5761){\makebox(0,0)[lb]{\smash{{\SetFigFont{6}{7.2}{\rmdefault}{\mddefault}{\updefault}{\color[rgb]{0,0,0}$-\beta$}%
}}}}
\put(601,-3136){\makebox(0,0)[lb]{\smash{{\SetFigFont{6}{7.2}{\rmdefault}{\mddefault}{\updefault}{\color[rgb]{0,0,0}$0$}%
}}}}
\put(601,-436){\makebox(0,0)[lb]{\smash{{\SetFigFont{6}{7.2}{\rmdefault}{\mddefault}{\updefault}{\color[rgb]{0,0,0}$\beta$}%
}}}}
\put(226,-886){\makebox(0,0)[lb]{\smash{{\SetFigFont{6}{7.2}{\rmdefault}{\mddefault}{\updefault}{\color[rgb]{0,0,0}$\alpha+1$}%
}}}}
\put(601,-1336){\makebox(0,0)[lb]{\smash{{\SetFigFont{6}{7.2}{\rmdefault}{\mddefault}{\updefault}{\color[rgb]{0,0,0}$\alpha$}%
}}}}
\end{picture}%
\caption{A configuration of $\Omega_{\ell,n}^\alpha$ with three external contours.}
\label{fig contours}
\efig
\end{centering}

A difficulty with our Peierls argument is to control the change in the number of loops when
erasing a contour. We find it convenient to condition on the configuration of external contours
away from (x,0); a similar idea was used in \cite{GS}. This gives a domain with
complicated shape, but this is not a problem. The background configuration of loops is now simpler
and this is useful.
Given $\omega \in \Omega_{D_\alpha}$, consider the set $\Gamma$ of external contours that do not surround $(x,0)$. Let ${\rm ext}\,\Gamma = D_\alpha \setminus {\rm int}_1 \Gamma$, and let $X(\omega) \subset {\rm ext}\,\Gamma$ denote the connected component that contains $(x,0)$. That is, the domain $X(\omega)$ is a subset of $D_\alpha$, and the graph whose vertices are the blocks, and with edges between blocks that can be connected by a double-bar, is connected. This is illustrated in Fig.\ \ref{fig set X}. Finally, let $\widetilde\Omega_X \subset \Omega_X$ denote the configurations without contours, or with just one external contour that surrounds $(x,0)$; and let $\widetilde\bbP_X(\cdot) = \bbP_X(\cdot | \widetilde\Omega_X)$ denote the conditional probability.

\begin{lemma}
\label{lem special Peierls}
There exists $p_X \geq 0$ such that $\sum_X p_X = 1$ and
\[
\bbP_{D_\alpha}(E_x^\circlearrowleft) = \sum_X p_X \widetilde\bbP_X(E_x^\circlearrowleft).
\]
\end{lemma}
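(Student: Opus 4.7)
The plan is to prove the lemma by decomposing each configuration $\omega\in\Omega_{D_\alpha}$ according to its set $\Gamma$ of external contours that do not surround $(x,0)$, and showing that the Gibbs weight factorises into a piece depending only on this ``outside'' data and a piece depending only on the configuration on the inner domain $X=X(\omega)$. Concretely, I would write $\omega=\omega_{\rm out}\cup\omega_X$, where $\omega_{\rm out}$ consists of the double bars forming the contours of $\Gamma$ together with all double bars lying inside ${\rm int}_1\Gamma$, and $\omega_X$ consists of the remaining double bars, those lying in $X$. By the very definition of $X$, the configuration $\omega_X$ can have no external contour other than (possibly) one that surrounds $(x,0)$, i.e.\ $\omega_X\in\widetilde\Omega_X$. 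Moreover, any loop of $\omega$ whose $1$-interior contains $(x,0)$ lies entirely inside $X$, so the indicator $1_{E_x^\circlearrowleft}(\omega)$ depends only on $\omega_X$.

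The main step is the factorisation of the weight. The number of double bars is trivially additive, $|\omega|=|\omega_{\rm out}|+|\omega_X|$, and the crucial claim is that the number of loops is also additive,
\[
\caL_{D_\alpha}(\omega)=\caL_X(\omega_X)+\caL_Y(\omega_{\rm out}),
\]
where $Y$ denotes the outside region with the block decomposition it inherits from the cuts produced by the contours of $\Gamma$, and $\caL_Y$ counts loops there with the usual convention of boundary bars at block ends. Additivity rests on the geometric observation that no loop of $\omega$ crosses an external contour: the double bars of a contour $\gamma\in\Gamma$ play simultaneously the role of boundary bars at the ends of those blocks of $X$ that $\gamma$ cuts, and of bars in $\omega_{\rm out}$ closing off loops on the outside. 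I would verify this by tracing loops of $\omega$ and checking that, whenever a trajectory reaches the boundary between $X$ and $Y$, it is forced to turn around by one of these double bars, so the restriction to $X$ (resp.\ $Y$) produces exactly the loops counted by $\caL_X(\omega_X)$ (resp.\ $\caL_Y(\omega_{\rm out})$). This geometric verification is the only non-routine point of the argument.

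Once the factorisation is in place, the Gibbs weight becomes a product,
\[
\bigl(\tfrac1n\bigr)^{|\omega|}(2S+1)^{\caL_{D_\alpha}(\omega)-|\omega|}=\bigl(\tfrac1n\bigr)^{|\omega_{\rm out}|}(2S+1)^{\caL_Y(\omega_{\rm out})-|\omega_{\rm out}|}\cdot\bigl(\tfrac1n\bigr)^{|\omega_X|}(2S+1)^{\caL_X(\omega_X)-|\omega_X|}.
\]
Grouping configurations by the value $X(\omega)=X$, let $W(X)$ be the sum of the first factor over all $\omega_{\rm out}$ producing $X$, and set $\widetilde Z(X)=\sum_{\omega_X\in\widetilde\Omega_X}(1/n)^{|\omega_X|}(2S+1)^{\caL_X(\omega_X)-|\omega_X|}$ and $p_X=W(X)\widetilde Z(X)/Z(D_\alpha)$. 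Then $p_X\ge 0$, and $\sum_X p_X=1$ because $\{X(\omega)=X\}_X$ partitions $\Omega_{D_\alpha}$. Using finally that $1_{E_x^\circlearrowleft}$ depends only on $\omega_X$,
\[
\bbP_{D_\alpha}(E_x^\circlearrowleft)=\sum_X\frac{W(X)}{Z(D_\alpha)}\sum_{\omega_X\in\widetilde\Omega_X}1_{E_x^\circlearrowleft}(\omega_X)\bigl(\tfrac1n\bigr)^{|\omega_X|}(2S+1)^{\caL_X(\omega_X)-|\omega_X|}=\sum_X p_X\,\widetilde\bbP_X(E_x^\circlearrowleft),
\]
which is the stated identity.
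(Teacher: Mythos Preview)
Your proposal is correct and follows essentially the same approach as the paper: partition $\Omega_{D_\alpha}$ according to the value of $X(\omega)$, then factorise the Gibbs weight into an ``outside'' piece (depending only on the data that determines $X$) and an ``inside'' piece on $X$. The only notable difference is in how the loop-count factorisation is packaged. You introduce an auxiliary domain $Y$ and claim the exact additivity $\caL_{D_\alpha}(\omega)=\caL_X(\omega_X)+\caL_Y(\omega_{\rm out})$, arguing by tracing loops across the boundary. The paper instead keeps the outside configuration $\omega$ in the full domain $D_\alpha$ (with $\omega|_X=\emptyset$) and writes
\[
\caL_{D_\alpha}(\omega\cup\omega')=\caL_{D_\alpha}(\omega)+\caL_X(\omega')-b(X),
\]
where $b(X)$ is the number of blocks of $X$; this is verified by the one-line induction ``it holds for $\omega'=\emptyset$ since $\caL_X(\emptyset)=b(X)$, and adding a double bar to $\omega'$ changes both sides equally.'' The paper's formulation sidesteps the need to define $Y$ and its block structure carefully (your $Y$ must absorb both ${\rm int}_1\Gamma$ and the other connected components of ${\rm ext}\,\Gamma$, with the contour double bars doubling as block-end bars), and its inductive justification is shorter than your geometric tracing argument. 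But the two formulations are equivalent, and your version makes the ``loops do not cross external contours'' intuition more explicit.
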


\begin{proof}
We have
\be
\bbP_{D_\alpha}(E_x^\circlearrowleft) = \frac1{Z(D_\alpha)} \sum_X \sumtwo{\omega: X(\omega)=X}{\phantom{\omega:} \omega|_X = \emptyset} \sum_{\omega' \in \widetilde\Omega_X \cap E_x^\circlearrowleft} \bigl( \tfrac1n \bigr)^{|\omega \cup \omega'|} (2S+1)^{\caL_{D_\alpha}(\omega \cup \omega') - |\omega \cup \omega'|}.
\ee
Let $b(X)$ the number of blocks of $X$; the number of loops of $\omega \cup \omega'$ satisfies the important relation
\be
\caL_{D_\alpha}(\omega \cup \omega') = \caL_{D_\alpha}(\omega) + \caL_X(\omega') - b(X).
\ee
Indeed, this holds for $\omega' = \emptyset$ since $\caL_X(\emptyset) = b(X)$; adding a double bar to $\omega'$ results in the same change in $\caL_{D_\alpha}(\omega \cup \omega')$ and $\caL_X(\omega')$.

Let $\widetilde Z(X)$ be the partition function on $\widetilde\Omega_X$; we have
\bm
\bbP_{D_\alpha}(E_x^\circlearrowleft) = \sum_X \frac1{Z(D_\alpha)} \sumtwo{\omega: X(\omega)=X}{\phantom{\omega:} \omega|_X = \emptyset} \bigl( \tfrac1n \bigr)^{|\omega|} (2S+1)^{\caL_{D_\alpha}(\omega) - |\omega| - b(X)} \widetilde Z(X) \\
\frac1{\widetilde Z(X)} \sum_{\omega' \in \widetilde\Omega_X \cap E_x^\circlearrowleft} \bigl( \tfrac1n \bigr)^{|\omega'|} (2S+1)^{\caL_X(\omega') - |\omega'|}.
\end{multline}
For given $X$, the first line of the right side gives $p_X$; the second line is $\widetilde\bbP_X(E_x^\circlearrowleft)$ and we get the lemma.
\end{proof}

\begin{centering}
\bfig
\begin{picture}(0,0)%
\includegraphics{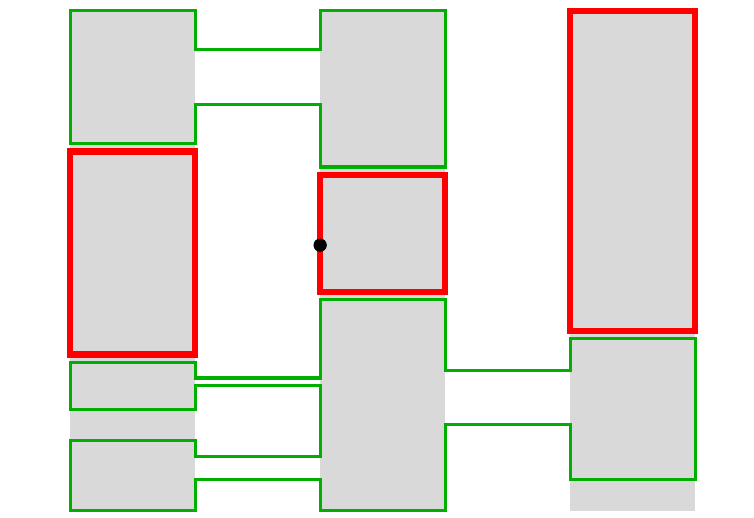}%
\end{picture}%
\setlength{\unitlength}{1973sp}%
\begingroup\makeatletter\ifx\SetFigFont\undefined%
\gdef\SetFigFont#1#2#3#4#5{%
  \reset@font\fontsize{#1}{#2pt}%
  \fontfamily{#3}\fontseries{#4}\fontshape{#5}%
  \selectfont}%
\fi\endgroup%
\begin{picture}(7211,4956)(526,-5224)
\put(901,-436){\makebox(0,0)[lb]{\smash{{\SetFigFont{7}{8.4}{\rmdefault}{\mddefault}{\updefault}{\color[rgb]{0,0,0}$\alpha$}%
}}}}
\put(7276,-1636){\makebox(0,0)[lb]{\smash{{\SetFigFont{9}{10.8}{\rmdefault}{\mddefault}{\updefault}{\color[rgb]{0,0,0}$X$}%
}}}}
\put(7276,-5086){\makebox(0,0)[lb]{\smash{{\SetFigFont{9}{10.8}{\rmdefault}{\mddefault}{\updefault}{\color[rgb]{0,0,0}$D_\alpha$}%
}}}}
\put(526,-5161){\makebox(0,0)[lb]{\smash{{\SetFigFont{7}{8.4}{\rmdefault}{\mddefault}{\updefault}{\color[rgb]{0,0,0}$\alpha+1$}%
}}}}
\put(2926,-2536){\makebox(0,0)[lb]{\smash{{\SetFigFont{7}{8.4}{\rmdefault}{\mddefault}{\updefault}{\color[rgb]{0,0,0}$(x,0)$}%
}}}}
\end{picture}%
\caption{The set $X$ around $(x,0)$.}
\label{fig set X}
\efig
\end{centering}

In view of Proposition \ref{prop surround}, to prove Theorem \ref{thm dimers} we need to show that, if $S\geq 40$, we 
have $\widetilde\bbP_X(E_x^\circlearrowleft) < \frac12$ for all $X$. This will be achieved in Proposition \ref{prop Peierls}.
In preparation for that estimate, we need two lemmas.

If $\widetilde\omega \in \Omega_{X}$ is a configuration that contains the contour $\gamma$, then it is the disjoint union
\be
\widetilde\omega = \gamma \cup \omega \cup \omega',
\ee
where $\omega \in \Omega_{{\rm ext}\, \gamma}$ and $\omega' \in \Omega_{{\rm int}_2 \gamma}$. Let ${\rm rint}_2 \gamma $ denote the shift of ${\rm int}_2 \gamma$ by one unit to the right; if $\omega' \in \Omega_{{\rm int}_2 \gamma}$, we also let ${\rm r}\omega' \in \Omega_{{\rm int}_2 \gamma}$ denote the right-shift of $\omega'$ by one unit. The next lemma is stated and proved for any configurations of $\Omega_X$, but we only need it for configurations of $\widetilde\Omega_X$.

\begin{lemma}
\label{lem last difficulty}
Let $\gamma$ be an external contour, $\omega \in \Omega_{{\rm ext}\, \gamma}$, and $\omega' \in \Omega_{{\rm int}_2 \gamma}$. Then the configuration $\gamma \cup \omega \cup \omega' \in \Omega_{X}$ satisfies
\[
\caL_X(\gamma \cup \omega \cup \omega') \leq \caL_X(\omega \cup {\rm r}\omega') + \tfrac12 |\gamma| + 1.
\]
\end{lemma}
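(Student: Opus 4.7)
The plan is to compare the two loop configurations via a sequence of single-bar operations, using the standard fact that each addition or removal of a double bar changes the total loop count by exactly $\pm 1$. The goal is to bound the loop-count increase caused by (re)introducing $\gamma$ and undoing the shift of $\omega'$ by $\tfrac12|\gamma| + 1$.

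I would transform $\gamma \cup \omega \cup \omega'$ into $\omega \cup {\rm r}\omega'$ in two conceptual steps: first, replace $\omega'$ by ${\rm r}\omega'$; second, erase the contour $\gamma$. The shift operation $\omega' \mapsto {\rm r}\omega'$ is a combinatorial bijection (it moves each bar on an $E_\ell^2$-edge inside $\gamma$ to the corresponding $E_\ell^1$-edge of ${\rm rint}_2\gamma$), so it preserves the intrinsic structure of the interior bars, and any net loop-count change from this step should stem only from the altered interface with the exterior configuration $\omega$. Step two removes $|\gamma|$ double bars, which naively can change the loop count by up to $|\gamma|$, plus the $+1$ for $\gamma$ itself being one loop.

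To obtain the sharper bound $\tfrac12|\gamma| + 1$ in place of the naive $|\gamma|+1$, I would exploit the closed-curve structure of $\gamma$ together with the fact that the bars of ${\rm r}\omega'$ land on $E_\ell^1$-edges, the same type as roughly half of $\gamma$'s bars. The plan is to pair the $E_\ell^1$-bars of $\gamma$ with nearby bars of ${\rm r}\omega'$ along the right boundary of $\gamma$, and argue that the combined operation (shift plus removal of the paired $E_\ell^1$-bars of $\gamma$) is loop-count neutral, because the ${\rm r}\omega'$-bars ``take over'' the role of the erased $E_\ell^1$-bars of $\gamma$ in maintaining the short-loop structure. The remaining $\tfrac12|\gamma|$ or so unpaired bars of $\gamma$ (on $E_\ell^2$-edges and on the left boundary) then each contribute at most $+1$ to the loop count, for a total of $\tfrac12|\gamma|$, together with the $+1$ accounting for $\gamma$ itself ceasing to be a loop.

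The main obstacle is implementing this pairing precisely and verifying its loop-count neutrality. I expect the key input to be the identity $|{\rm int}_1\gamma| - |{\rm int}_2\gamma| = \tfrac12 n L(\gamma)$, which quantifies the $E_\ell^1$ excess associated with the perimeter of $\gamma$ and should govern how many bars of ${\rm r}\omega'$ effectively cancel against $E_\ell^1$-bars of $\gamma$. Care is needed at the boundary of $\gamma$, where the shifted bars interface with $\omega$, and an explicit order of bar removal (perhaps sequential along $\gamma$'s trajectory) may be necessary to make each $\pm 1$ step accountable.
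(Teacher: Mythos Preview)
Your plan has a genuine gap. The bar-by-bar comparison you propose cannot control the shift step, because $|\omega'|$ is not bounded in terms of $|\gamma|$: moving each bar of $\omega'$ to its shifted position costs two $\pm 1$ operations, and there is no a priori cancellation without further structure. Your remedy --- ``pair the $E_\ell^1$-bars of $\gamma$ with nearby bars of ${\rm r}\omega'$'' --- presupposes that ${\rm r}\omega'$ supplies bars in the right places and the right number to absorb the erased bars of $\gamma$, but $\omega'$ is arbitrary (it could even be empty), so no such pairing exists in general. You also point to the identity $|{\rm int}_1\gamma|-|{\rm int}_2\gamma|=\tfrac12 nL(\gamma)$, but that identity concerns the \emph{area} of the interior, not the number of bars of $\gamma$ on each edge type, and it plays no role in this lemma.

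The paper's argument avoids bar operations entirely. The first step is an \emph{exact identity}
\[
\caL_X(\gamma \cup \omega \cup \omega') \;=\; 1 + \caL_{{\rm ext}\,\gamma \,\cup\, {\rm rint}_2\gamma}(\omega \cup {\rm r}\omega'),
\]
which holds because the closed curve $\gamma$ completely decouples its interior from its exterior: the loops of $\gamma\cup\omega\cup\omega'$ are exactly $\gamma$ itself, the loops of $\omega$ in ${\rm ext}\,\gamma$, and the interior loops, and the latter are unchanged under a rigid shift (they are the loops of ${\rm r}\omega'$ in ${\rm rint}_2\gamma$). This single observation disposes of $\omega'$ in one stroke, with no bar counting. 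The second step is a \emph{domain} comparison rather than a bar comparison: one enlarges the domain ${\rm ext}\,\gamma \cup {\rm rint}_2\gamma$ to $X$ by inserting the ``green blocks'', i.e.\ the connected components of ${\rm int}_1\gamma\setminus {\rm rint}_2\gamma$. Inserting one such (empty) block can lower the loop count by at most $1$, and the number of green blocks is at most $\tfrac12|\gamma|$ because each one is bounded above and below by a double bar of $\gamma$. This gives the $\tfrac12|\gamma|+1$ directly. To salvage your approach you would essentially have to rediscover the decoupling identity above; without it, the shift step remains uncontrolled.
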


\begin{proof}
See Fig.\ \ref{fig picproof}; we have
\be
\caL_X(\gamma \cup \omega \cup \omega') = 1 + \caL_{{\rm ext}\,\gamma \cup {\rm rint}_2 \gamma}(\omega \cup {\rm r}\omega').
\ee
We call ``green blocks" the connected components of ${\rm int}_1 \gamma \setminus {\rm rint}_2 \gamma$. We add these blocks one by one to the domain; at each step, the number of loops may decrease by one, so that $\caL_{{\rm ext}\,\gamma \cup {\rm rint}_2 \gamma}(\omega \cup {\rm r}\omega') \leq \caL_X(\omega \cup {\rm r}\omega') + \#\text{green blocks}$. The number of green blocks is less than $\frac12 |\gamma|$.

\begin{centering}
\bfig
\begin{picture}(0,0)%
\includegraphics{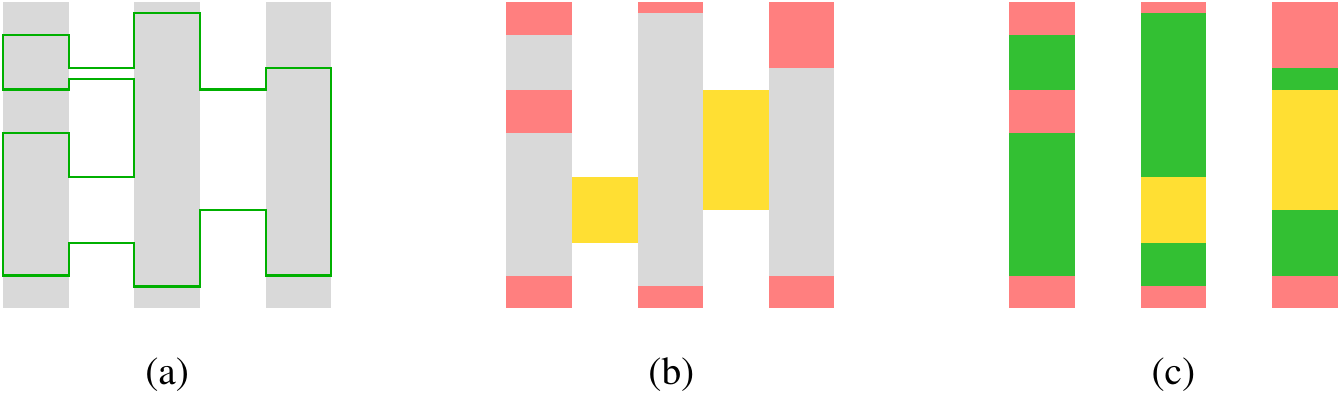}%
\end{picture}%
\setlength{\unitlength}{2763sp}%
\begingroup\makeatletter\ifx\SetFigFont\undefined%
\gdef\SetFigFont#1#2#3#4#5{%
  \reset@font\fontsize{#1}{#2pt}%
  \fontfamily{#3}\fontseries{#4}\fontshape{#5}%
  \selectfont}%
\fi\endgroup%
\begin{picture}(9173,2702)(579,-3062)
\put(2926,-1186){\makebox(0,0)[lb]{\smash{{\SetFigFont{10}{12.0}{\rmdefault}{\mddefault}{\updefault}{\color[rgb]{0,.69,0}$\gamma$}%
}}}}
\put(6376,-661){\makebox(0,0)[lb]{\smash{{\SetFigFont{10}{12.0}{\rmdefault}{\mddefault}{\updefault}{\color[rgb]{1,0,0}${\rm ext}\, \gamma$}%
}}}}
\put(4501,-1486){\makebox(0,0)[lb]{\smash{{\SetFigFont{8}{9.6}{\rmdefault}{\mddefault}{\updefault}{\color[rgb]{0,0,0}${\rm int}_2 \gamma$}%
}}}}
\put(8851,-1786){\makebox(0,0)[lb]{\smash{{\SetFigFont{8}{9.6}{\rmdefault}{\mddefault}{\updefault}{\color[rgb]{0,0,0}${\rm rint}_2 \gamma$}%
}}}}
\end{picture}%
\caption{(a) a contour $\gamma$; (b) the blocks of ${\rm ext}\,\gamma$ (red) and ${\rm int}_2 \gamma$ (yellow); (c) the blocks of ${\rm ext}\,\gamma$ (red) and ${\rm rint}_2 \gamma$ (yellow), and the green blocks.}
\label{fig picproof}
\efig
\end{centering}

\end{proof}

The following lemma is motivated by a Peierls argument. The probability that $(x,0)$ is surrounded by a contour can be bounded by a sum over these contours with exponentially small weights.

\begin{lemma}
\label{lem pre-Peierls}
\[
\widetilde\bbP_X(E_x^\circlearrowleft) \leq (2S+1) \sum_{\gamma \in \widetilde\Omega_X \cap E_x^\circlearrowleft} \bigl( \tfrac1n \bigr)^{|\gamma|} \, (2S+1)^{-\frac12 |\gamma|} \, \bigl( 1 + \tfrac1n \bigr)^{-\frac12 n L(\gamma)}.
\]
\end{lemma}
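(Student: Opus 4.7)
My plan is to carry out a Peierls-type estimate by summing over the unique external contour $\gamma$ that can surround $(x,0)$ in a configuration of $\widetilde\Omega_X \cap E_x^\circlearrowleft$. Every such configuration decomposes uniquely as $\widetilde\omega = \gamma \cup \omega \cup \omega'$ with $\omega \in \Omega_{{\rm ext}\,\gamma}$ and $\omega' \in \Omega_{{\rm int}_2\gamma}$ both contour-free, so
\[
\widetilde\bbP_X(E_x^\circlearrowleft) = \frac{1}{\widetilde Z(X)} \sum_\gamma \sum_{\omega,\omega'} \bigl(\tfrac1n\bigr)^{|\gamma|+|\omega|+|\omega'|} (2S+1)^{\caL_X(\gamma \cup \omega \cup \omega')-|\gamma|-|\omega|-|\omega'|},
\]
where the outer sum is over contours $\gamma$ surrounding $(x,0)$.

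For the numerator I would apply Lemma \ref{lem last difficulty} to replace $\caL_X(\gamma \cup \omega \cup \omega')$ by $\caL_X(\omega \cup {\rm r}\omega') + \tfrac12|\gamma| + 1$. Using $|\omega'|=|{\rm r}\omega'|$, this extracts the prefactor $(2S+1)^{1-\frac12|\gamma|}(\tfrac1n)^{|\gamma|}$ and reduces the inner sum to
\[
N(\gamma) := \sum_{\omega,\omega'} \bigl(\tfrac1n\bigr)^{|\omega|+|{\rm r}\omega'|} (2S+1)^{\caL_X(\omega \cup {\rm r}\omega')-|\omega|-|{\rm r}\omega'|},
\]
where $(\omega,\omega')$ ranges over contour-free configurations supported on $D_\gamma := {\rm ext}\,\gamma \cup {\rm rint}_2\gamma$.

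The remaining task is to establish $\widetilde Z(X) \geq (1+\tfrac1n)^{\frac12 n L(\gamma)}\, N(\gamma)$, which will supply the factor $(1+\tfrac1n)^{-\frac12 n L(\gamma)}$ in the conclusion. To obtain it, I would inject each configuration $\omega \cup {\rm r}\omega'$ counted by $N(\gamma)$ into $\widetilde\Omega_X$ by appending an arbitrary within-block double-bar configuration $\xi$ on the ``green'' blocks $G := {\rm int}_1\gamma \setminus {\rm rint}_2\gamma$. Since $G$ is disjoint from $D_\gamma$ and $\xi$ contains no cross bars, the extended configuration $\omega \cup {\rm r}\omega' \cup \xi$ remains contour-free and therefore lies in $\widetilde\Omega_X$; moreover each added bar in an isolated block simply splits a short loop in two, yielding the additive identity $\caL_X(\omega \cup {\rm r}\omega' \cup \xi) = \caL_X(\omega \cup {\rm r}\omega') + |\xi|$. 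The $(2S+1)^{|\xi|}$ from the new loops cancels the $(2S+1)^{-|\xi|}$ in the weight, so the sum over $\xi$ collapses to $\sum_\xi (\tfrac1n)^{|\xi|} = (1+\tfrac1n)^{|G|}$; combining with the defining relation $|{\rm int}_1\gamma|-|{\rm int}_2\gamma| = \tfrac12 n L(\gamma)$ and the injectivity of the right-shift gives $|G| = \tfrac12 n L(\gamma)$. Summing over $\gamma$ then produces the claimed inequality. The main obstacle is this last packing step: one must verify that the extension $(\omega,\omega',\xi) \mapsto \omega \cup {\rm r}\omega' \cup \xi$ sends distinct triples to distinct elements of $\widetilde\Omega_X$ and that the additive loop-count identity holds with no cross terms, both of which rely on the disjointness of ${\rm ext}\,\gamma$, ${\rm rint}_2\gamma$, and $G$ inside $X$ together with the restriction of $\xi$ to within-block bars.
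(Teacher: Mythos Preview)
Your proposal is correct and follows essentially the same approach as the paper: decompose by the surrounding external contour $\gamma$, apply Lemma~\ref{lem last difficulty} to the numerator, and lower-bound $\widetilde Z(X)$ by augmenting each configuration on ${\rm ext}\,\gamma \cup {\rm rint}_2\gamma$ with arbitrary within-block bars on $A(\gamma)={\rm int}_1\gamma\setminus{\rm rint}_2\gamma$ (your $G$), using $|A(\gamma)|=\tfrac12 n L(\gamma)$. The paper dispatches your ``main obstacle'' in one line by noting that $\omega$ (and hence $\omega\cup{\rm r}\omega'$) contains only short loops and no contours, so each added within-block bar indeed increments the loop count by one.
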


\begin{proof}
We have
\be
\begin{split}
&\sum_{\omega' \in \widetilde\Omega_X \cap E_x^\circlearrowleft} \bigl( \tfrac1n \bigr)^{|\omega'|} (2S+1)^{\caL_X(\omega') - |\omega'|} \\
&= \sum_{\gamma \in \widetilde\Omega_X \cap E_x^\circlearrowleft} \sum_{\omega \in \Omega_{{\rm ext}\,\gamma}} \sum_{\omega' \in \Omega_{{\rm int}_2 \gamma}} \bigl( \tfrac1n \bigr)^{|\gamma \cup \omega \cup \omega'|} (2S+1)^{\caL_X(\gamma \cup \omega \cup \omega') - |\gamma \cup \omega \cup \omega'|} \\
&\leq (2S+1) \sum_{\gamma \in \widetilde\Omega_X \cap E_x^\circlearrowleft} \bigl( \tfrac1n \bigr)^{|\gamma|} (2S+1)^{-\frac12 |\gamma|} \sum_{\omega \in \Omega_{{\rm ext}\,\gamma}} \sum_{\omega' \in \Omega_{{\rm int}_2 \gamma}} \bigl( \tfrac1n \bigr)^{|\omega \cup \omega'|} (2S+1)^{\caL_X(\omega \cup \omega') - |\omega \cup \omega'|}.
\end{split}
\ee
%The first inequality is due to the fact that $(x,0)$ may be surrounded by more than one contour.
The inequality follows from Lemma \ref{lem last difficulty}.
We now get a lower bound for $\widetilde Z(X)$. Let $A(\gamma) = {\rm int}_1 \gamma \setminus {\rm rint}_2 \gamma$; since $|A(\gamma)| = \frac12 n L(\gamma)$, we have
\be
\begin{split}
\widetilde Z(X) &= \sum_{\omega \in \widetilde\Omega_X} \bigl( \tfrac1n \bigr)^{|\omega|} \, (2S+1)^{\caL_X(\omega) - |\omega|} \\
&\geq \sum_{\omega \in \Omega_{{\rm ext}\,\gamma}} \sum_{\omega' \in \Omega_{{\rm rint}_2 \gamma}} \sum_{\omega'' \in \Omega_{A(\gamma)}} \bigl( \tfrac1n \bigr)^{|\omega \cup \omega' \cup \omega''|} \, (2S+1)^{\caL_X(\omega \cup \omega' \cup \omega'') - |\omega \cup \omega' \cup \omega''|}.
\end{split}
\ee
We restrict the sum over $\omega''$ so that double bars are within single blocks of $A(\gamma)$, that is, $\omega'' \subset A(\gamma)$. Then $\caL_X(\omega \cup \omega' \cup \omega'') = \caL_X(\omega \cup \omega') + |\omega''|$; this holds true because $\omega$ contains only simple loops, and no contours. The sum over $\omega''$ gives $(1 + \frac1n)^{|A(\gamma)|}$ and we obtain the estimate claimed in the statement.
\end{proof}

\begin{proposition}
\label{prop Peierls}
We have
\[
\lim_{n\to\infty} \bbP_{\beta,\ell,n}(E_x^\circlearrowleft | \Omega_{\ell,n}^\alpha) \leq \frac{64}{(2S+1)^{3/2}} + \frac{128}{(2S+1)^2} + \frac1{12} \sum_{k\geq7} \frac{(k+1) 4^k}{(2S+1)^{\frac12 k - 1}}.
\]
\end{proposition}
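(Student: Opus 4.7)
By Lemma \ref{lem contour expression} the conditional probability equals $\bbP_{D_\alpha}(E_x^\circlearrowleft)$, and by Lemma \ref{lem special Peierls} this is a convex combination of $\widetilde\bbP_X(E_x^\circlearrowleft)$ over possible shapes $X$ of the connected component containing $(x,0)$. So it suffices to bound $\widetilde\bbP_X(E_x^\circlearrowleft)$ uniformly in $X$ and in $n$, and then take $n\to\infty$. Lemma \ref{lem pre-Peierls} already expresses this as a single-contour Peierls sum,
\[
\widetilde\bbP_X(E_x^\circlearrowleft) \leq (2S+1) \sum_{\gamma} \bigl(\tfrac1n\bigr)^{|\gamma|} (2S+1)^{-|\gamma|/2} \bigl(1 + \tfrac1n\bigr)^{-nL(\gamma)/2},
\]
with $\gamma$ ranging over contours in $\widetilde\Omega_X$ surrounding $(x,0)$, so the task reduces to a purely combinatorial estimate of this sum.

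In the limit $n\to\infty$, $(1 + 1/n)^{-nL/2} \to \e{-L/2}$, and the discrete sum over the time positions of the $|\gamma|=k$ double bars (weighted by $(1/n)^k$) becomes a Riemann integral over ordered $k$-tuples in $[-\beta,\beta]^k$. Organise the contours by $k=|\gamma|$ and, for fixed $k$, by their \emph{shape}, meaning the sequence of edges at which the $k$ jumps occur. Once the shape is fixed, $L(\gamma)$ depends only on the times $(t_1,\dots,t_k)$, and one can integrate them out. A generic integral of the form $\int \e{-L(\sigma,t_1,\dots,t_k)/2} \prod dt_i$ is bounded uniformly in $\beta$ and in the shape $\sigma$: the requirement that $(x,0)\in{\rm int}_1\gamma$ localises one of the times, and the exponential suppression dominates the growth of the time-integration region arising from the remaining ordered variables.

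For $k$ large, a crude combinatorial estimate suffices: the number of shapes of closed loops with $k$ jumps surrounding $(x,0)$ is at most a constant multiple of $(k+1) 4^k$, where the $4^k$ bounds the choice of successive horizontal/vertical moves along the loop and the factor $k+1$ accounts for the choice of a distinguished jump used to root the enumeration. Combining with the weight $(2S+1)\cdot(2S+1)^{-k/2}$ from Lemma \ref{lem pre-Peierls}, each $k\geq 7$ contributes at most $\tfrac{1}{12}\frac{(k+1)4^k}{(2S+1)^{k/2-1}}$, the constant $\tfrac{1}{12}$ absorbing the bounded time integrals. For the smallest allowed values $k=4,5,6$ this general bound is too wasteful because the $4^k$-factor heavily overcounts; a direct enumeration of the near-minimal rectangular contours and their small deformations yields the sharper bounds $\tfrac{64}{(2S+1)^{3/2}}$ and $\tfrac{128}{(2S+1)^2}$ quoted in the statement.

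The main obstacle will be making the combinatorial count of shapes both precise and independent of $X$. The domain $X$ has an irregular boundary (its outer shape is determined by the external contours we conditioned away), and one must verify that a contour surrounding $(x,0)$ inside $X$ cannot have more possibilities than one inside the full $D_\alpha$; this is essentially a monotonicity observation tied to the topology of contours as closed space-time curves, but it must be checked carefully, together with the uniform control in $\beta$ of the integrals produced by $\e{-L/2}$ against the entropic weight of time configurations of given length.
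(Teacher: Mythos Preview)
Your overall structure is correct and matches the paper: reduce via Lemmas \ref{lem contour expression}, \ref{lem special Peierls}, \ref{lem pre-Peierls} to a single-contour Peierls sum, then bound that sum combinatorially. But your accounting of where the factors $4^k$ and $\tfrac1{12}$ come from is wrong, and the two errors happen to cancel. You assert that the number of \emph{shapes} (edge sequences) with $k$ jumps is $O((k+1)4^k)$ and that the time integrals are uniformly bounded, absorbed into $\tfrac1{12}$. Neither is right. In this loop model each double bar forces a reversal of the time direction, so at each jump there are only two choices (left/right neighbor), not four; moreover the first and last directions are forced by closure, giving at most $2^{k-2}$ shapes once a root is fixed. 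The root is the leftmost crossing of $\{t=0\}$ going up, and since the contour must reach $x$, there are at most $\tfrac13(k+1)$ choices for it. On the other hand the time integrals are \emph{not} bounded: once you parameterise by the $k$ waiting times $t_1,\dots,t_k$ between successive jumps, one has $nL(\gamma)\approx\sum t_i$, and each geometric sum $\sum_{t_i\geq1}(1+\tfrac1n)^{-t_i/2}$ multiplied by $\tfrac1n$ tends to $2$ as $n\to\infty$, contributing a further factor $2^k$. Thus $\tfrac{k+1}{3}\cdot 2^{k-2}\cdot 2^k=\tfrac{(k+1)4^k}{12}$. Two minor points: the smallest surrounding contour has $k=5$, not $k=4$; and uniformity in $X$ is immediate, since any contour in $X$ is already a contour in $D_\alpha$, so one simply drops the constraint.
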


\begin{proof}
Starting with the upper bound in Lemma \ref{lem pre-Peierls}, we explicitly perform the sum over the external contour $\gamma$ that surrounds $(x,0)$. Let us assign a direction to the loops, namely, they go ``up" on top of the sites $\{-\ell+1,-\ell+3, \dots, \ell-1\}$ and ``down" on top of the sites $\{-\ell+2,-\ell+4,\dots,\ell\}$. Let $y$ be the leftmost site where the loop crosses the $t=0$ line and moves up. We can choose $\gamma$ by summing over numbers $t_1,\dots,t_k$ of time intervals $\frac1n$ between jumps (not counting the ends of the blocks of $X$), and by choosing the $k-2$ directions of the jumps (the first and last directions are forced). Then
\be
\begin{split}
\sum_{\gamma \in \widetilde\Omega_X \cap E_x^\circlearrowleft} &\bigl( \tfrac1n \bigr)^{|\gamma|} \, (2S+1)^{-\frac12 |\gamma|} \, \bigl( 1 + \tfrac1n \bigr)^{-\frac12 n L(\gamma)} \\
&\leq \sum_{y \leq x} \sum_{k \geq |x-y|} 2^{k-2} \bigl( \tfrac1n \bigr)^k (2S+1)^{-\frac12 k} \sum_{t_1, \dots, t_k \geq 1} \prod_{i=1}^k \bigl( 1 + \tfrac1n \bigr)^{-\frac12 t_i}.
\end{split}
\ee
Since $y \in \{ -\ell+1, -\ell+3, \dots, \ell-1 \}$, there are less than $\frac13 (k+1)$ possibilities. Each sum over $t_i$ gives $[1 - (1+\frac1n)^{-1/2}]^{-1}$. It gets multiplied by $\frac1n$, which gives 2 in the limit $n\to\infty$. The contribution of contours with $k$ double bars is then less than $\frac1{12} (k+1) 4^k (2S+1)^{1-\frac12 k}$. The smallest contours have $k=5$; they actually have limited entropy, so the contribution of contours with $k=5$ and $k=6$ is less than $2^{k+1} (2S+1)^{1-\frac12 k}$. This gives the bound of the lemma.
\end{proof}

\begin{proof}[Proof of Theorem \ref{thm dimers}]
From Proposition \ref{prop AN}, Lemma \ref{lem simple configs}, and Proposition \ref{prop surround}, we have for $x \in \{-\ell+3, \dots, \ell-1 \}$ that
\be
\langle P_{x,x+1}^{(0)} \rangle_{\infty,\ell} - \langle P_{x-1,x}^{(0)} \rangle_{\infty,\ell} \geq \bigl( 1 - \tfrac1{(2S+1)^2} \bigr) \inf_\alpha \lim_{\beta\to\infty} \lim_{n\to\infty} \bigl[ 1 - 2 \bbP_{\beta,\ell,n}(E_x^\circlearrowleft | \Omega_{\ell,n}^\alpha) \bigr].
\ee
The result follows from Proposition \ref{prop Peierls}. Indeed, the upper bound is finite when $S > \frac{15}2$ and it is decreasing in $S$; it is equal to $\frac12$ at $S=39.2$.
\end{proof}

\begin{proof}[Proof of Theorem \ref{thm exp decay}]
In a similar fashion as Proposition \ref{prop AN}, we can show that
\be
\langle S_x^i S_y^j \rangle_{\beta,\ell} \rangle = \tfrac13 S (S+1) \delta_{i,j} (-1)^{x-y} \bbP_{\beta,\ell,n}(x \leftrightarrow y).
\ee
This formula was explicitly derived in \cite{AN} and \cite{Uel}. For $x$ and $y$ to be connected, if $|x-y| \geq 2$, there must exists a contour $\gamma$ that surrounds $x$ and such that $|\gamma|$ is greater than $|x-y|$. By Lemmas \ref{lem contour expression}, \ref{lem special Peierls}, and \ref{lem pre-Peierls}, we get
\be
\e{|x-y|/\eta} \bbP_{\beta,\ell,n} (x \leftrightarrow y | \Omega_{\ell,n}^\alpha) \leq (2S+1) \sum_{\gamma \in \tilde\Omega_X \cap E_x^\circlearrowleft} \bigl( \e{1/\eta} \tfrac1n \bigr)^{|\gamma|} (2S+1)^{-\frac12 |\gamma|} \bigl( 1 + \tfrac1n \bigr)^{-\frac12 n L(\gamma)}.
\ee
We can adapt the proof of Proposition \ref{prop Peierls}. The only difference is a factor $\e{k/\eta}$, so we get
\be
\lim_{n\to\infty} \e{|x-y|/\eta} \bbP_{\beta,\ell,n}(x \leftrightarrow y | \Omega_{\ell,n}^\alpha) \leq \tfrac1{12} \sum_{k \geq |x-y|} \frac{(k+1) (4 \e{1/\eta})^k}{(2S+1)^{\frac12 k - 1}}.
\ee
The series converges when $S > \frac{15}2$ and $\eta$ is large enough. The theorem follows from Lemma \ref{lem simple configs}.
\end{proof}

\section{Discussion}\label{sec discussion}

For the spin chain with Hamiltonian \eq{HamSN}, we established the existence of dimerization when $S\geq 40$: in the thermodynamic
limit the model has at least two 2-periodic ground states in which the translation invariance is broken. This follows directly from Theorem
\ref{thm dimers}. We do not expect that these models
have other ground states. In particular, based on what has been shown for the antiferromagnetic $XXZ$ chain \cite{datta:2002},
it seems unlikely that domain-wall ground states exist for this model. We also proved that in the two ground states we constructed, the 
$SU(2)$-invariance remains unbroken.

As stated in Theorem \ref{thm exp decay}, our proof of dimerization also  implies exponential decay of correlations in the ground states. 
Following the arguments of Kennedy and Tasaki \cite{kennedy:1992}, exponential decay implies a spectral gap in this setting. 

One may also ask about the stability of the dimerization under small translation-invariant perturbations of the interaction.
Since the model is not frustration free and involves translation symmetry breaking, the result of Michalakis and Zwolak \cite{michalakis:2013} 
does not apply but we expect that the random loop representation can be used as a starting point for studying perturbations 
of the model using traditional cluster expansion techniques as is done, e.g., in \cite{kennedy:1992}. We have not pursued this possibility
as it is beyond the scope of this work, but establishing stability under arbitrary, uniformly bounded, and sufficiently small perturbation of the interactions, would certainly be un important contribution to understanding the phase diagram of quantum spin chains \cite{affleck:1987}.

\section*{Acknowledgments}

The research reported in this article was completed during the workshop {\em Many-Body Quantum Systems and Effective Theories} at the Mathematisches Forschungsinstitut Oberwolfach, September 11--17, 2016. We thank the organizers and the institute for the great hospitality and the stimulating program. One referee made very useful comments. The work of BN was supported in part by the National Science Foundation under Grant DMS-1515850. 

%\bibliographystyle{amsplain}
%\providecommand{\bysame}{\leavevmode\hbox to3em{\hrulefill}\thinspace}
%\bibliography{qss}
%\end{thebibliography}
%\end{document}

{
\renewcommand{\refname}{\small References}
\bibliographystyle{symposium}

}

\end{document}